\newtheorem{theorem}{Theorem}
\newtheorem{corollary}[theorem]{Corollary}
\newtheorem{definition}[theorem]{Definition}
\newtheorem{example}[theorem]{Example}
\newtheorem{lemma}[theorem]{Lemma}
\newtheorem{proposition}[theorem]{Proposition}
\newtheorem{remarks}[theorem]{Remarks}
\begin{document}
\title{
Synchronization in discrete-time networks with general pairwise
coupling \\{\small Published in Nonlinearity \textbf{22} (2009)
2333-2351.}}



\author{}
\author{}
\author{Frank Bauer\thanks{\texttt{bauer@mis.mpg.de}}
\and Fatihcan M. Atay\thanks{\texttt{fatay@mis.mpg.de};
http://private-pages.mis.mpg.de/fatay} \and J\"urgen
Jost\thanks{\texttt{jjost@mis.mpg.de}} }

\date{Max Planck Institute for Mathematics in the Sciences\\
 Inselstrasse 22, D-04103 Leipzig, Germany}

\maketitle

\begin{abstract}
We consider complete synchronization of identical maps coupled
through a general interaction function and in a general network
topology where the edges may be directed and may carry both positive
and negative weights.
We define mixed transverse exponents and derive sufficient
conditions for local complete synchronization. The general
non-diffusive
coupling scheme can lead to new synchronous behavior, in networks of
identical units, that cannot be produced by single units in
isolation. In particular, we show that synchronous chaos can emerge
in networks of simple units. Conversely, in networks of chaotic
units simple synchronous dynamics can emerge; that is, chaos can be
suppressed through synchrony.
\\\\Mathematics Subject Classification: 05C50, 37D45, 37E05,
39A11\\\\ PACS numbers : 05.45.-a, 
05.45.Ra, 
05.45.Xt 
\end{abstract}

\section{Introduction}

Synchronization in complex networks has been studied extensively by
many scientists in the past years (for recent reviews see
\cite{Arenas08,Chavez07,Pikovsky01}). Specific application areas
include social networks (opinion formation, finance, and world trade
web), biological networks (genetic networks, cardiac rhythms, and
neural networks), and technological networks (wireless communication
networks and power-grids); see \cite{Arenas08} and the references
therein. Synchronization is one particular collective behavior in
complex networks that emerges through the interaction of the
constituent units. Complex systems are generally characterized by
the richness of emergent  behavior arising from the interaction of
many  elements or agents, which themselves are typically rather
simple and often interact only locally or with only a few other
ones. How dynamically rich behavior can emerge in a network of
simple units is an important general question in complexity. In this
article, we study the relation of the concept of emergence to
synchronization\footnote{There exits several notions of
synchronization \cite{Pikovsky01}. In this work we study complete
synchronization, where the differences between the states of coupled
units tend to zero.}, in the setting of coupled identical map
networks.

Coupled map networks introduced by Kaneko \cite{Kaneko84} have
become one of the standard models in synchronization studies.
Particularly, chaotic synchronization can be investigated already in
simple one-dimensional maps, which, as is well-known, would require
at least three dimensions in the continuous-time case. On the one
hand, a chaotic system's sensitive dependence to initial conditions
would tend to lead the coupled system away from synchrony in the
presence of ever-so-small perturbations. On the other hand, a
diffusive coupling \footnote{Diffusive coupling refers to a coupling
function that vanishes whenever its arguments are the same, see
(\ref{13}).} between the units tends to equalize the states of
neighboring units by providing a driving force that grows with the
difference of the states, thereby driving the system towards
synchrony. In the interplay between these two effects, it has indeed
been discovered that diffusively coupled chaotic systems can exhibit
robust synchronization for an appropriate range of parameters.
Synchronization in diffusively-coupled map networks with positive
weights is now well understood: One finds that the synchronizability
of the network depends on the underlying network topology (given by
the eigenvalues of the coupling matrix) and the dynamical behavior
of the individual units (given by the largest Lyapunov exponent)
\cite{Jost01,Lu04}.

A notable aspect of diffusively coupled systems of identical units
is that, while diffusive coupling tends to drive the network
towards synchrony, the synchronized network shows exactly the same
dynamical behavior as a single isolated unit. This is a simple
consequence of the fact that the coupling term vanishes at the
synchronized state. Hence, normally no new behavior arises through
synchronization of diffusively-coupled systems. A notable
exception is in the presence of time delays: It has been
discovered that networks with time delays not only are able to
synchronize (in fact sometimes better than undelayed networks),
but also can exhibit a very rich range of new synchronized
behavior \cite{Atay06, Atay04b}.

However, new synchronized behavior can not only be observed in
diffusively coupled networks, when time-delays are present. In
this work we show that also in non-diffusively coupled networks
new synchronized behavior is emerging. The focus of this work is
to understand, in detail, the impact of non-diffusive coupling
schemes on the synchronized behavior. The synchronized behavior of
a non-diffusively coupled network can be very different from the
individual dynamics of its components (units). In contrast to
diffusively-coupled networks, the overall coupling strength does
not only determine the robustness of the synchronous state but
also the emerging synchronous dynamics, i.e. by changing the
overall coupling strength, different synchronous dynamics can be
exhibited by the same system. In particular we shall see the
following two extreme cases. Synchronized chaos can emerge in
networks of simple units and chaos can be suppressed in a network
of chaotic units.




\section{General pairwise coupling \label{19}}
We consider a network of identical units that interact with each
other. The coupling topology will affect the resulting dynamics in
such a network. In many real world applications the connection
structure is not bidirectional. Furthermore the influence of other
units can be excitatory or inhibitory. These phenomena can, for
example, be observed in neural networks, where there exists
excitatory and inhibitory synapses and only the pre-synaptic neuron
influences the post-synaptic one but not vice versa. Examples of
synchronization in simple models of neural networks are investigated
in \cite{Bauer08b}. Since different networks interact through
different interaction functions, we want to keep our coupling
function sufficiently general. Taking all these requirements into
account, we study the following network model.

Let $\Gamma$ be a non-trivial, weighted, directed graph on $n$
vertices. The weight of the connection from vertex $j$ to vertex $i$
is denoted by $w _{ij}$, which could be positive, negative or zero.
Positive and negative weights could model, for instance, excitatory
and inhibitory connections, respectively. We assume that the network
has no self-loops, that is, $w _{ii}=0$ for all $i$. The in-degree
of vertex $i$ is denoted by $d_i = \sum_{j=1}^n w _{ij}$. Even if a
vertex is not isolated, it is possible that the in-degree of this
vertex is equal to zero because of cancellations between positive
and negative weights. These vertices are called quasi-isolated
vertices, because their properties are very similar to isolated
vertices \cite{Bauer08}. Note that by definition every isolated
vertex is quasi-isolated. In the following we will identify each
unit with a vertex of a graph. This correspondence allows us to make
use of graph theoretical methods.

The activity at vertex or unit $i$ at time $t+1$ is given by:
\begin{equation}
x_i(t+1) = f(x_i(t)) + \epsilon \sigma_i\sum_{j=1}^n w
_{ij}g(x_i(t),x_j(t)) \qquad  i = 1,...,n\label{1},
\end{equation}
where
$$
\sigma_i = \left\{
\begin{array}{l c l}\frac{1}{d_i} & \mbox{if} & d_i \neq 0 \\
 0 & \mbox{if} & d_i = 0,
\end{array} \right.
$$
$f: \mathbb{R} \rightarrow \mathbb{R}$ and $g: \mathbb{R}^2
\rightarrow \mathbb{R}$ are 
differentiable functions, and $\epsilon \in \mathbb{R}$ is the
overall coupling strength. In the following we assume that the
derivatives of $f$ and $g$ are bounded along synchronous solutions
$s(t)$ of system (\ref{1}), where all units exhibit the same
behavior for all time i.e., \[x_i(t) = s(t) \;\;\forall \,i,\,t.\]
The function $f$ describes the dynamical behavior of the
individual units whereas $g$ characterizes the interactions
between different pairs of units.

We say that the interaction is diffusive if the coupling function
$g$ satisfies the general diffusion condition
\begin{equation}\label{13} g(x,x) = 0 \;\; \forall \,x\in \mathbb{R}.
\end{equation}
A synchronous solution always exists if the interaction function $g$
satisfies (\ref{13}). Under this assumption the dynamical behavior
of the whole network is exactly the same as the dynamical behavior
of the isolated units, that is,
\begin{equation} \label{46} s(t+1) = f(s(t)).
\end{equation}
However, if (\ref{13}) is not satisfied then a synchronous solution
may not always exist. Indeed, if $s(t)$ is a synchronous solution
such that $g(s(t),s(t))\neq 0$ for some $t$, then (\ref{1}) implies
that either $\sigma_i=0$ for all $i$, or $\sigma_i\neq 0$ for all
$i$. In other words, a synchronized solution exists only if all
vertices are quasi-isolated or none of them are. The first case is
trivial as there is no interaction.
Therefore, for the study of emergent dynamics when (\ref{13}) is not
satisfied, we shall later on restrict ourselves to the second case,
i.e., to networks without quasi-isolated vertices. For such networks
synchronous solutions exist and satisfy
\begin{equation} \label{25} s(t+1) = f(s(t)) + \epsilon g(s(t),s(t)).
\end{equation}
Eq.~(\ref{25}) already shows that the synchronous behavior of the
network is different from the
behavior of isolated units. 
However, at this point it is not clear whether a synchronized state
is robust against perturbations.  We will study this issue in some
detail in the following sections.

It is important to note that the interactions between the
different units in Eq.~(\ref{1}) are ``normalized" by the factor
$\sigma_i$. Otherwise a synchronous solution would only exist, in
the non-diffusive case, under the assumption that all vertices
have the same vertex in-degree. Consequently a synchronous
solution would only exist for regular graphs.

\section{The coupling matrix \label{20}}
For diffusively coupled networks, the graph Laplacian is the natural
coupling matrix. For a directed, weighted graph without loops, the
(normalized) graph Laplacian $\mathcal{L}$  is defined as,
\[(\mathbf{\mathcal{L}})_{ij} := \left\{
\begin{array}{r cl} 1 & \mbox{ if} & i = j \mbox{ and $d_i \neq 0$}. \\
 -\frac{w _{ij}}{d_i} &\mbox{ if}& \mbox{there is a directed edge from }j \mbox{ to }i\mbox{  and } d_i\neq 0. \\
0&&
 \mbox{otherwise.}
\end{array} \right.\]
Here in the more general coupling case, the natural coupling matrix
is given by $\mathbf{K}$, where $\mathbf{K}$ for a directed,
weighted network without loops, is defined as
\[(\mathbf{K})_{ij} := \left\{
\begin{array}{r cl} 1 & \mbox{ if}& i = j \mbox{ and $d_i = 0$}. \\
 \frac{w _{ij}}{d_i} & \mbox{ if}& \mbox{there is a directed edge from
   }j \mbox{ to }i\mbox{ and } d_i\neq 0. \\
0&&
 \mbox{otherwise.}
\end{array} \right.\]
Let the eigenvalues of $\mathbf{K}$ and $\mathcal{L}$ be labeled
as $\lambda_1,...,\lambda_n$ and $ \lambda^\prime_1,...,
\lambda^\prime_n$, respectively. Since the row sums of
$\mathbf{K}$ are equal to $1$, $\mathbf{K}$ has always an
eigenvalue equal to 1, which  corresponds to the eigenvector
$\mathbf{e} = (1,...,1)^\top$.

We briefly discuss the relationship between the coupling matrix
$\mathbf{K}$ and the graph Laplacian $\mathcal{L}$ for directed
weighted graphs. The graph Laplacian $\mathbf{\mathcal{L}}$ and the
coupling matrix $\mathbf{K}$ are related to each other by
\[\mathbf{\mathcal{L}} = \mathbf{I} - \mathbf{K},\] where
$\mathbf{I}$ is the $n\times n$ identity matrix.  Thus we have
\begin{equation}
\label{11} \lambda^\prime_i = 1 - \lambda_i, \;\;\forall \, i.
\end{equation}
Hence the multiplicity of the zero eigenvalue of $\mathcal{L}$ is
equal to the multiplicity of the eigenvalue $\lambda=1$ of
$\mathbf{K}$.

The spectral properties of $\mathcal{L}$ and $\mathbf{K}$ for
directed graphs with mixed signs are investigated systematically in
\cite{Bauer08}. The presence of directed edges or mixed signs leads
to some interesting differences in the spectrum of $\mathcal{L}$
compared to the case of undirected edges and nonnegative weights.
For convenience of the reader, we mention here some of these
differences. For undirected graphs with nonnegative weights it is
well-known that all eigenvalues of $\mathcal{L}$ are real
\cite{Chung97}. However, this is not true anymore if one studies
directed graphs or mixed
signs. 
Furthermore, while in the case of only nonnegative weights the
absolute values of the eigenvalues are bounded by 2 \cite{Chung97},
this is no longer true for the case of mixed signs. Indeed, using
Gershgorin's Theorem \cite{Horn06} we have the following estimate.

\begin{lemma}\label{48}  Let $\mathcal{D}(c,r)$ denote
the disk in the complex plane centered at $c$ and having radius $r$.
Assume that there is at least one non quasi-isolated vertex
(otherwise $\mathcal{L}$ is equal to the zero matrix). Then all
eigenvalues of the graph Laplacian $\mathcal{L}$ are contained in
the disk $\mathcal{D}(1,r)$, where
\begin{equation}\label{r} r := \max_i \frac{\sum_{j=1}^n\left|w
_{ij}\right|}{\left|\sum_{j=1}^nw _{ij}\right|} = \max_i
\frac{\sum_{j=1}^n\left|w
_{ij}\right|}{\left|d_i\right|},\end{equation} with the convention
$\left|d_i\right|^{-1} = 0$ if $d_i = 0$.
\end{lemma}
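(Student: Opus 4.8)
The plan is to apply Gershgorin's Circle Theorem directly to the matrix $\mathcal{L} = \mathbf{I} - \mathbf{K}$. Recall that Gershgorin's Theorem states that every eigenvalue of $\mathcal{L}$ lies in at least one of the disks $\mathcal{D}\bigl(\mathcal{L}_{ii}, R_i\bigr)$, where $R_i = \sum_{j \neq i} |\mathcal{L}_{ij}|$ is the $i$-th deleted absolute row sum. So the first step is simply to compute the diagonal entries and the deleted row sums of $\mathcal{L}$ from the explicit definition given above.

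For a non quasi-isolated vertex $i$ (i.e., $d_i \neq 0$), the diagonal entry is $\mathcal{L}_{ii} = 1$, and the off-diagonal entries are $\mathcal{L}_{ij} = -w_{ij}/d_i$ for $j \neq i$; since the graph has no self-loops, $w_{ii} = 0$, so $\sum_{j \neq i} |\mathcal{L}_{ij}| = \sum_{j=1}^n |w_{ij}| / |d_i|$. Hence the Gershgorin disk for row $i$ is centered at $1$ with radius $\sum_{j=1}^n |w_{ij}| / |d_i| \le r$. For a quasi-isolated vertex $i$ (i.e., $d_i = 0$), we have $\mathcal{L}_{ii} = 0$ and all off-diagonal entries in that row are $0$, so the Gershgorin disk degenerates to the single point $\{0\}$; under the convention $|d_i|^{-1} = 0$ this is exactly $\mathcal{D}(1, r)$ provided $r \ge 1$, which holds because $r$ is a max that includes the term for at least one non quasi-isolated vertex, and for such a vertex $\sum_j |w_{ij}| \ge |\sum_j w_{ij}| = |d_i|$ by the triangle inequality, giving $r \ge 1$. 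Thus every Gershgorin disk, whether from a non quasi-isolated or a quasi-isolated row, is contained in $\mathcal{D}(1, r)$, and therefore so is the union of all of them, which contains all eigenvalues of $\mathcal{L}$.

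There is essentially no hard step here; the only things to be careful about are (i) handling the quasi-isolated rows correctly and confirming that the point $0$ they contribute actually lies in $\mathcal{D}(1,r)$ — this is where the observation $r \ge 1$ is needed, and it relies on the hypothesis that at least one non quasi-isolated vertex exists — and (ii) checking that the two expressions for $r$ in equation~(\ref{r}) agree, which is immediate since $d_i = \sum_{j=1}^n w_{ij}$ by definition. I would write the proof as: invoke Gershgorin, identify the disks for the two types of rows, note $r\ge 1$, and conclude that the union of disks lies in $\mathcal{D}(1,r)$.
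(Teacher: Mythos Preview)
Your proposal is correct and is exactly the approach the paper indicates: the paper does not spell out a proof but simply prefaces the lemma with ``using Gershgorin's Theorem we have the following estimate,'' and your argument supplies precisely those details. Your handling of the quasi-isolated rows via the observation $r\ge 1$ (from the triangle inequality applied to a non quasi-isolated vertex, whose existence is assumed) is a necessary detail the paper leaves implicit.
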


Note that the radius $r$ in Eq.~(\ref{r}) can be written in the form
\[r = \max_i \left|\frac{d_i^+ + d_i^-}{d_i^+ - d_i^-}\right|,\]
where $d_i ^+ := \sum_{j:w_{ij} \geq 0}w_{ij}$ is the positive
in-degree and $d_i ^- := \sum_{j:w_{ij} \leq 0}|w_{ij}|$ the
negative in-degree. So, for fixed sum $d_i^+ + d_i^-$, the closer
$d_i^+$ and $d_i^-$ are, the larger is the radius $r$. Clearly,
$r=1$ when the weights are nonnegative, but $r$ can be much larger
in the case of signed weights. The radius $r$ will play an
important role in Chapter \ref{57} where we derive sufficient
conditions for synchronization.

Since we deal with a graph $\Gamma$ with both positive and
negative weights, we should make some notions precise. Let the
weighted adjacency matrix of $\Gamma$ be given by $\mathbf{W}=
[w_{ij}]$. Define the corresponding (usual) graph $\Gamma^\prime$
with adjacency matrix $\mathbf{A} = [a_{ij}]$ such that $a_{ij}
=1$ if $w_{ij} \neq 0$ and $a_{ij} =0$ otherwise. Then we say that
$\Gamma$ is strongly connected (resp. has a spanning tree) if
$\Gamma^\prime$ is strongly connected (resp. has a spanning tree).
In the general setting of directed edges and mixed signs, a
non-complete graph refers to a graph where there exists at least
one pair of distinct vertices with no link between them.

The multiplicity of the zero eigenvalue of $\mathcal{L}$ can be
bounded from below by the following two graph properties:

\begin{lemma}\label{17}
Let $m_0$ be the multiplicity of the zero eigenvalue
 of $\mathcal{L}$.
\begin{enumerate}
\item If $n_1$ denotes the number of quasi-isolated vertices, then
\[n_1 \leq m_0.\]
\item \label{24} If  $n_2$ denotes the minimum number of trees
needed to span the graph, then
\[n_2 \leq m_0.\]
\end{enumerate}
\end{lemma}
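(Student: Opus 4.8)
The plan is to prove each of the two bounds by exhibiting, for each quasi-isolated vertex (resp. each tree in a minimal spanning forest), an independent eigenvector of $\mathcal{L}$ with eigenvalue $0$, equivalently an eigenvector of $\mathbf{K}$ with eigenvalue $1$ (by Eq.~(\ref{11})). For part~1, let $i$ be a quasi-isolated vertex, so $d_i=0$. Then by the definition of $\mathbf{K}$ its $i$-th row is $\mathbf{e}_i^\top$ (a $1$ on the diagonal and zeros elsewhere), so $\mathcal{L}=\mathbf{I}-\mathbf{K}$ has zero $i$-th row. Hence $\mathcal{L}^\top$ has zero $i$-th column, and the vectors $\{\mathbf{e}_i : i \text{ quasi-isolated}\}$ span an $n_1$-dimensional space in the left kernel of $\mathcal{L}$; since the algebraic and geometric multiplicities of a left eigenvalue coincide with those of the corresponding right eigenvalue, $n_1 \le m_0$. (Alternatively one argues directly on the right kernel, but the row observation is the cleanest.)

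For part~2, I would first pass from the signed graph $\Gamma$ to the ordinary graph $\Gamma'$ with adjacency matrix $\mathbf{A}=[a_{ij}]$, since by the conventions just set up ``spanning trees'' refer to $\Gamma'$. Decompose $\Gamma'$ into its $n_2$ weakly connected components $C_1,\dots,C_{n_2}$ (this is exactly the minimum number of trees needed to span it: each weakly connected component needs at least one tree, and is spanned by one). I claim the indicator vectors $\chi_{C_1},\dots,\chi_{C_{n_2}}$ — where $\chi_{C_k}$ has entries $1$ on vertices of $C_k$ and $0$ elsewhere — are eigenvectors of $\mathbf{K}$ with eigenvalue $1$, i.e.\ lie in $\ker\mathcal{L}$. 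Indeed, fix a vertex $i \in C_k$. If $d_i = 0$, the $i$-th row of $\mathcal{L}$ is zero and the claim is immediate. If $d_i \neq 0$, the $i$-th component of $\mathcal{L}\,\chi_{C_k}$ is $1 - \sum_{j} \frac{w_{ij}}{d_i}(\chi_{C_k})_j$; but every $j$ with $w_{ij}\neq 0$ is joined to $i$ in $\Gamma'$, hence lies in the same weak component $C_k$, so $(\chi_{C_k})_j = 1$ for all such $j$, and the sum is $\frac{1}{d_i}\sum_j w_{ij} = 1$, giving $0$. For $i \notin C_k$ the $i$-th component is $-\sum_j \frac{w_{ij}}{d_i}(\chi_{C_k})_j$, and now every neighbour $j$ of $i$ lies outside $C_k$, so $(\chi_{C_k})_j = 0$ and the component vanishes. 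These $n_2$ indicator vectors have disjoint supports, hence are linearly independent, so $n_2 \le m_0$.

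The only genuinely delicate point is the bookkeeping between left and right eigenvalues in part~1, and the identification in part~2 of ``minimum number of spanning trees'' with ``number of weakly connected components of $\Gamma'$'' — both are standard, but I would state the latter explicitly: a forest of trees covering all vertices must use at least one tree per weak component (a tree is connected, hence lives inside a single weak component), and conversely each weak component, being connected, has a spanning tree; thus $n_2$ equals the number of weak components. No further estimate is needed — unlike the Gershgorin argument for Lemma~\ref{48}, this is purely a kernel computation. One should note the bounds are only lower bounds on $m_0$ because cancellations among signed weights, or degenerate structure, can create additional zero eigenvalues not accounted for by these two combinatorial quantities.
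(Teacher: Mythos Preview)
Your Part~1 is fine and essentially matches the paper's argument: quasi-isolated vertices give zero rows of $\mathcal{L}$, hence the rank drops by at least $n_1$.

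Part~2, however, has a genuine gap. You identify $n_2$ with the number of \emph{weakly} connected components of $\Gamma'$, but in this paper the graph is directed and ``spanning tree'' means a rooted directed tree (arborescence). A weakly connected directed graph need not admit a single spanning arborescence; the minimum number of arborescences needed equals the number of \emph{source} strongly connected components in the condensation, which can exceed the number of weak components. Concretely, take vertices $\{1,2,3,4,5\}$ with unit-weight edges $1\leftrightarrow 2$, $3\leftrightarrow 4$, $1\to 5$, $3\to 5$. This graph is weakly connected, has no quasi-isolated vertices, but has two source SCCs $\{1,2\}$ and $\{3,4\}$, so $n_2=2$. Your indicator-vector argument produces only the all-ones vector and yields merely $m_0\ge 1$, whereas the lemma claims $m_0\ge 2$ (and indeed $(1,1,0,0,\tfrac12)^\top$ is a second kernel vector). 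Note also that your indicator-vector trick does not extend to source SCCs: if $C$ is a source SCC and $i\notin C$ receives an edge from $C$, then $(\mathcal{L}\chi_C)_i=-\sum_{j\in C}w_{ij}/d_i\neq 0$.

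The paper instead passes to the Frobenius normal form of $\mathcal{L}$: after permutation, $\mathcal{L}$ is block upper-triangular with diagonal blocks corresponding to SCCs, and its spectrum is the union of the spectra of those blocks. The $n_2$ source SCCs are precisely the diagonal blocks whose off-diagonal block-row vanishes; since $\mathcal{L}$ has zero row sums, each such block inherits zero row sums and hence contributes a zero eigenvalue. This is the missing idea in your argument.
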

\begin{proof}
1. Assume that there exists $n_1$ quasi-isolated vertices in the
graph. Then, there exists $n_1$ rows of $\mathcal{L}$ that consist
entirely of zeros. Consequently, $\mathcal{L}$ has at least $n_1$
eigenvalues equal to zero. \\\\2. Assume that one needs at least
$n_2$ trees to span the whole graph. Let $\mathcal{L}$ be given in
Frobenius normal form \cite{Brualdi91}, i.e. after possibly
relabeling the vertices, $\mathcal{L}$ is given in the form
\begin{equation} \mathcal{L} = \left(
\begin{array}{cccccc} \mathcal{L}_1 & \mathcal{L}_{12} &...  &\mathcal{L}_{1p} &
\\0& \mathcal{L}_2&...&\mathcal{L}_{2p} \\\vdots & \vdots& \ddots & \vdots \\  0& 0& ...&
\mathcal{L}_p
\end{array} \right),
\end{equation} where the block diagonal matrices $\mathcal{L}_i$
correspond to the strongly connected components of the graph. The
spectrum of $\mathcal{L}$ then satisfies
\begin{equation} \label{eq:spec}  \mbox{spec}(\mathcal{L}) =
\bigcup_{i=1}^p \mbox{spec}(\mathcal{L}_i).
\end{equation} Since one needs $n_2$ trees to span the whole graph,
there exist $n_2$ block matrices $\mathcal{L}_i$ such that
$\mathcal{L}_{ij}$ is the zero matrix for $i<j\leq p$. Thus, these
$n_2$ block matrices $\mathcal{L}_i$ have zero row sums because
$\mathcal{L}$ has zero row sums. The result now follows from
(\ref{eq:spec}).
\end{proof}

\section{Synchronization}
We want to study synchronous solutions of Eq.~(\ref{1}) and whether
the synchronous state is robust to perturbations. We say the system
(\ref{1}) (locally) synchronizes if
\[\lim_{t \rightarrow \infty} |x_i(t) - x_j(t)| = 0 \quad \forall \,
i,j,\] whenever the initial conditions  belong to some appropriate
open set\footnote{If one considers chaotic synchronization, i.e. $f
+ \epsilon g$ is chaotic, then there exists  subtleties concerning
this open set and the exact notion of attraction. These issues are
carefully studied in \cite{Lu07}.  For the purposes of this article,
these subtleties are not important.}. In this article the term
synchronization always refers to this definition.

\subsection{General pairwise coupling and only quasi-isolated vertices}
This case is not very insightful, as Eq.~(\ref{1}) implies that
there are no interactions between the different units. The time
evolution of each unit is given by Eq.~(\ref{46}). Thus the
network only synchronizes if nearby orbits of the function $f$
converge, i.e. the Lyapunov exponent \begin{equation} \label{Lyap}
\mu_f :=
\overline{\lim}_{T\rightarrow\infty}\frac{1}{T}\sum_{s=\bar{t}}^{\bar{t}+T-1}\log|f'(s(t))|,\end{equation}
of $f$ is negative. Here $\bar{t}$ is chosen such that
$f'(s(t))\neq 0$ for all $t>\bar{t}$. A negative Lyapunov exponent
implies that the trajectory $s(t)$ is already attracting for $f$,
and hence, no emergence of new dynamics. Hence chaotic
synchronization is not possible as chaos requires a positive
Lyapunov exponent $\mu_f$.

\subsection{General pairwise coupling without quasi-isolated
vertices}To characterize this case we start with the following
definition.
\begin{definition}The $k$-th mixed
transverse exponent $\chi_k$ is defined for $2\leq k \leq n$ as:
\begin{equation} \chi_k
:=\overline{\lim}_{T\rightarrow\infty}
\frac{1}{T}\sum_{s=\bar{t}}^{\bar{t}+T-1}\log|h_k(s(t))|,\label{30}
\end{equation} where \[h_k(s(t))=f'(s(t)) + \epsilon
\partial_1g(s(t),s(t)) + \epsilon
\partial_2g(s(t),s(t))\lambda_k\]and $\bar{t}$ is chosen such that $h(s(t))\neq
0$ for all $t>\bar{t}$. If no such $\bar{t}$ exists we set $\chi_k =
- \infty$.
\end{definition}
These exponents combine the dynamical behavior of the functions $f$
and $g$ with the network topology. Furthermore we define the
\textit{maximal mixed transverse exponent $\chi$} as
\[\chi := \max_{k \geq 2} \chi_k. \]
\\
The next theorem shows that the maximal mixed transverse exponent
governs the synchronizability of the network.

\begin{theorem}\label{15}
System (\ref{1}) synchronizes if the maximal mixed transverse
exponent is negative.
\end{theorem}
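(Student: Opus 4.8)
The plan is to linearize the dynamics around a synchronous solution $s(t)$ and decompose the perturbation using the eigenstructure of the coupling matrix $\mathbf{K}$. Writing $x_i(t) = s(t) + \xi_i(t)$ and substituting into (\ref{1}), the first-order variational equation for the perturbation vector $\boldsymbol{\xi}(t) = (\xi_1(t),\dots,\xi_n(t))^\top$ reads
\begin{equation}
\boldsymbol{\xi}(t+1) = \bigl[ \bigl(f'(s(t)) + \epsilon\,\partial_1 g(s(t),s(t))\bigr)\mathbf{I} + \epsilon\,\partial_2 g(s(t),s(t))\,\mathbf{K} \bigr]\boldsymbol{\xi}(t) + o(\|\boldsymbol{\xi}(t)\|),
\end{equation}
using that $\sum_j w_{ij}/d_i = 1$ for every non-quasi-isolated vertex so that the diagonal part collects a full copy of $\mathbf{I}$, while the $j$-dependent part contributes $\epsilon\,\partial_2 g\cdot(\mathbf{K})_{ij}$. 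The key observation is that synchronization concerns only the component of $\boldsymbol{\xi}$ transverse to the synchronization manifold $\mathrm{span}\{\mathbf{e}\}$, and that the transverse subspace is invariant under the linearized map because $\mathbf{K}$ has $\mathbf{e}$ as eigenvector with eigenvalue $1$.

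First I would put $\mathbf{K}$ in Jordan (or Schur) form. Since $\mathbf{K}$ need not be diagonalizable, I would choose a basis in which $\mathbf{K}$ restricted to the transverse complement of $\mathbf{e}$ is upper triangular with the eigenvalues $\lambda_2,\dots,\lambda_n$ on the diagonal. In this basis the linearized transverse dynamics become block upper-triangular, with the $k$-th diagonal entry equal precisely to $h_k(s(t)) = f'(s(t)) + \epsilon\,\partial_1 g(s(t),s(t)) + \epsilon\,\partial_2 g(s(t),s(t))\,\lambda_k$. Along the $k$-th diagonal direction the growth rate of the linearized flow over a long time window is governed by $\frac{1}{T}\sum_{s=\bar t}^{\bar t+T-1}\log|h_k(s(t))|$, whose limsup is by definition $\chi_k$. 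Hence $\chi = \max_{k\ge 2}\chi_k < 0$ forces every diagonal term to be contracting on average.

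Next I would promote this contraction of the diagonal terms to contraction of the full (merely triangular, possibly with nontrivial off-diagonal Jordan blocks) transverse cocycle. The standard device is: for any $\delta>0$ there is a constant $C_\delta$ so that the product of the linearized transverse maps from time $\bar t$ to time $\bar t + T$ has norm at most $C_\delta\, e^{(\chi+\delta)T}$; the off-diagonal entries of a triangular product contribute only polynomially-in-$T$ many cross terms, each still bounded by the product of diagonal factors times bounded $\partial_2 g$ and fixed off-diagonal constants, so they are absorbed by replacing $\chi$ with $\chi+\delta$. Choosing $\delta$ small enough that $\chi+\delta<0$ gives uniform exponential decay of the linearized transverse perturbation. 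A routine Gronwall-type argument then handles the neglected $o(\|\boldsymbol{\xi}\|)$ terms: on a sufficiently small neighborhood of the synchronization manifold the nonlinear remainder is dominated by the exponential contraction, so $\|\boldsymbol{\xi}^{\perp}(t)\|\to 0$, which is exactly $\lim_{t\to\infty}|x_i(t)-x_j(t)|=0$ for initial data in an appropriate open set.

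The main obstacle I expect is the non-normality of $\mathbf{K}$: because the eigenvalues can be complex and $\mathbf{K}$ may have nontrivial Jordan blocks, one cannot simply diagonalize and read off $n-1$ scalar problems, and the intermediate (pre-limit) growth of a triangular cocycle can temporarily exceed the asymptotic rate $\chi$. Controlling this transient amplification — i.e. getting a clean bound of the form $C_\delta e^{(\chi+\delta)T}$ with $C_\delta$ independent of $T$, and ensuring the nonlinear terms remain subdominant uniformly — is the delicate technical point; the use of $\overline{\lim}$ (limsup) in the definition of $\chi_k$ rather than a genuine limit is precisely what makes this require the $\delta$-slack and a careful choice of the attracting neighborhood.
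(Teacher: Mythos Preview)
Your proposal is correct and follows essentially the same route as the paper: linearize about $s(t)$, pass to Jordan form of $\mathbf{K}$, show that each transverse Jordan block contracts because its diagonal multiplier $h_k(s(t))$ has negative exponent (the paper isolates this step as a separate lemma, Theorem~\ref{thm:linear-convergence}, proved by exactly the inductive ``polynomial cross terms absorbed into $\chi+\delta$'' argument you sketch), and finally observe that the surviving $\mathbf{e}$-component cancels in $|x_i-x_j|$. The only notable difference is that you explicitly mention a Gronwall step for the nonlinear remainder, which the paper's proof leaves implicit.
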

Before we prove this theorem we prove the following theorem that
also holds for time-dependent functions.
\begin{theorem}
\label{thm:linear-convergence}Consider the system of equations
\begin{equation}v_{i}(t+1)=\left\{\begin{array}{lc}
k_1(t)v_{i}(t), & i=1, \\
 k_1(t)v_{i}(t)+k_2(t)v_{i-1}(t), & i=2,\dots,m,
\end{array}\right.\label{eq:a1}\end{equation}
where $v_{i}\in\mathbb{R}$ and $k_1$ and $k_2$ are bounded functions
on $\mathbb{R}$. Suppose there exists $\bar{t}\in\mathbb{R}$ such
that
\begin{equation} k_1(t)\neq0\mbox{ for
}t\ge\bar{t}.\label{eq:tbar}\end{equation}
 Suppose further that \begin{equation}
\eta:=\overline{\lim}_{T\rightarrow\infty}\frac{1}{T}\sup_{t_{0}\geq\bar{t}}\sum_{s=t_{0}}^{t_{0}+T-1}\log|k_1(s)|<0.\label{eq:mu}\end{equation}
Then for any $\varepsilon\in(0,-\eta)$ there exists
$K_{\varepsilon}\geq0$ such that all solutions of (\ref{eq:a1})
satisfy\begin{equation} \Vert(v_{1}(t),\dots,v_{m}(t))\Vert\leq
K_{\varepsilon}e^{(\eta+\varepsilon)(t-t_{0})}\Vert(v_{1}(t_{0}),\dots,v_{m}(t_{0}))\Vert\label{eq:result}\end{equation}
for all $t\ge t_{0}\geq\bar{t}.$ On the other hand, if there is no
such $\bar{t}$ satisfying (\ref{eq:tbar}), then \begin{equation}
\Vert(v_{1}(t),\dots,v_{m}(t))\Vert=0\quad\mbox{for all large
}t.\label{eq:result2}\end{equation}
\end{theorem}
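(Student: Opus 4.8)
The plan is to treat (\ref{eq:a1}) as a linear, time-varying system whose coefficient matrix is a single $m\times m$ Jordan block with time-dependent diagonal entry $k_1(t)$ and subdiagonal entry $k_2(t)$. First I would control the scalar equation for $v_1$ --- equivalently the fundamental solution $\Phi(t,t_0):=\prod_{s=t_0}^{t-1}k_1(s)$, with the convention $\Phi(t_0,t_0):=1$ --- and then propagate the estimate up the chain $v_1\rightarrow v_2\rightarrow\cdots\rightarrow v_m$ by variation of parameters, picking up one extra power of $(t-t_0)$ at each level; that polynomial loss is finally absorbed into a slightly larger exponential by sacrificing a little of the decay rate.

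\emph{Step 1 (fundamental solution).} Fix $\varepsilon'\in(0,-\eta)$ and let $M\ge1$ be a common bound for $|k_1|$ and $|k_2|$. Hypothesis (\ref{eq:mu}) supplies a $T_0$ with $\sum_{s=t_0}^{t_0+T-1}\log|k_1(s)|\le(\eta+\varepsilon')T$ for every $T\ge T_0$ and every $t_0\ge\bar t$ (on $[\bar t,\infty)$ these logarithms are finite by (\ref{eq:tbar})). For a long horizon $t-t_0\ge T_0$ this gives $|\Phi(t,t_0)|\le e^{(\eta+\varepsilon')(t-t_0)}$, while for $0\le t-t_0<T_0$ one has $|\Phi(t,t_0)|\le M^{t-t_0}\le M^{T_0}$, and since $\eta+\varepsilon'<0$ this last bound is dominated by a constant multiple of $e^{(\eta+\varepsilon')(t-t_0)}$. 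Hence there is $K_1\ge1$, \emph{independent of $t_0$}, with $|\Phi(t,t_0)|\le K_1 e^{(\eta+\varepsilon')(t-t_0)}$ for all $t\ge t_0\ge\bar t$; the uniformity in the base point is exactly what the $\sup_{t_0\ge\bar t}$ in (\ref{eq:mu}) provides.

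\emph{Step 2 (propagation).} Variation of parameters in (\ref{eq:a1}) gives, for $i\ge2$,
\[
v_i(t)=\Phi(t,t_0)\,v_i(t_0)+\sum_{s=t_0}^{t-1}\Phi(t,s+1)\,k_2(s)\,v_{i-1}(s).
\]
I would then show by induction on $i$ that $|v_i(t)|\le C_i\,(1+t-t_0)^{i-1}\,e^{(\eta+\varepsilon')(t-t_0)}\,\Vert(v_1(t_0),\dots,v_m(t_0))\Vert$ for all $t\ge t_0\ge\bar t$ and suitable constants $C_i\ge1$; the base case $i=1$ is Step 1. For the inductive step one inserts the bound for $v_{i-1}$, uses $|\Phi(t,s+1)|\le K_1 e^{(\eta+\varepsilon')(t-s-1)}$ (valid because $s+1\ge\bar t$) and $|k_2(s)|\le M$, and notes that $\sum_{s=t_0}^{t-1}(1+s-t_0)^{i-2}\le(1+t-t_0)^{i-1}$, which closes the induction with $C_i:=K_1(1+M C_{i-1}e^{-(\eta+\varepsilon')})$. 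Finally, given $\varepsilon\in(0,-\eta)$ I would run the above with some $\varepsilon'\in(0,\varepsilon)$; since $(1+u)^{m-1}\le A\,e^{(\varepsilon-\varepsilon')u}$ on $[0,\infty)$ for a suitable $A$, summing the $m$ component estimates and using equivalence of norms on $\mathbb{R}^m$ yields (\ref{eq:result}) with a single constant $K_\varepsilon$.

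For the degenerate alternative, if no $\bar t$ satisfies (\ref{eq:tbar}) then the zero set $Z:=\{t:k_1(t)=0\}$ is unbounded. Taking any $\tau_1\in Z$, the first line of (\ref{eq:a1}) forces $v_1(\tau_1+1)=0$, and then $v_1\equiv0$ on $[\tau_1+1,\infty)$; with $v_1$ gone, the equation for $v_2$ becomes $v_2(t+1)=k_1(t)v_2(t)$ for $t\ge\tau_1+1$, so a later $\tau_2\in Z$ kills $v_2$ beyond $\tau_2+1$, and so on. After choosing $\tau_1<\tau_2<\cdots<\tau_m$ in $Z$ one gets $v_i\equiv0$ on $[\tau_i+1,\infty)$, hence (\ref{eq:result2}). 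The step I expect to be the real obstacle is Step 1 --- pinning down the estimate on $\Phi(t,t_0)$ with a constant independent of $t_0$; this uniformity is indispensable in Step 2, where $\Phi(t,s+1)$ must be controlled simultaneously over all intermediate starting indices $s+1$, and it is precisely why (\ref{eq:mu}) carries a supremum over base points rather than a fixed $t_0$. A minor reassuring point is that the argument only ever multiplies copies of $k_1$ and never divides by $k_1$, so the possibility that $k_1$ approaches $0$ along the orbit causes no difficulty.
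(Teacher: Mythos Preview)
Your proposal is correct and follows essentially the same route as the paper: bound the transition product $\Phi(t,t_0)$ uniformly in $t_0$ via (\ref{eq:mu}), feed that into the variation-of-parameters formula, and absorb the resulting polynomial factors into the exponential rate; the degenerate case is handled identically. The only cosmetic difference is that the paper swallows one linear factor at each inductive level by stepping the rate from $\eta+\frac{k}{m}\varepsilon$ to $\eta+\frac{k+1}{m}\varepsilon$, whereas you accumulate the full $(1+t-t_0)^{i-1}$ and absorb it in one shot at the end by choosing $\varepsilon'<\varepsilon$.
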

\begin{proof}
The homogeneous equation\begin{equation}
v_1(t+1)=k_1(t)v_1(t)\label{eq:hom}\end{equation} has the solution
$v_1(t)=\Phi(t,t_{0})v_1(t_{0})$, where the state transition
function $\Phi$ is given by
$\Phi(t,t_{0})=\prod_{s=t_{0}}^{t-1}k_1(s)$, $t>t_{0},$ and
$\Phi(t,t)=1$ for $\forall t$. In the following, $t_{0}\ge\bar{t}$.
By (\ref{eq:mu}), for any $\varepsilon\in(0,-\eta)$ there exists
$T^{\prime}$ such that \[
\sup_{t_{0}\ge\bar{t}}\frac{1}{T}\sum_{s=t_{0}}^{t_{0}+T-1}\log|k_1(s)|<\eta+\frac{\varepsilon}{m}<0,\quad\mbox{for
all }T> T^{\prime}.\] Thus, \[ \prod_{s=t_{0}}^{t-1}|k_1(s)|\leq
e^{(\eta+\frac{1}{m}\varepsilon)(t-t_{0})},\quad\mbox{if }t>
t_{0}+T^{\prime},\]
 whereas \[
\prod_{s=t_{0}}^{t-1}|k_1(s)|\leq M^{(t-t_{0})}\leq
M^{T^{\prime}},\quad\mbox{if }t_{0}<t\leq t_{0}+T^{\prime}\]
 where $M=\sup_{t\in\mathbb{R}}|k_1(t)|$. Thus, \[
\prod_{s=t_{0}}^{t-1}|k_1(s)|\leq
C_{1}^{\varepsilon}e^{(\eta+\frac{1}{m}\varepsilon)(t-t_{0})},\quad\mbox{\ensuremath{\forall}}t>t_{0}\ge\bar{t}\]
where the constant
$C_{1}^{\varepsilon}\ge\max\{M^{T^{\prime}}e^{-(\eta+\frac{1}{m}\varepsilon)T^\prime},1\}$
is independent of $t_{0}$. Consequently, \begin{equation}
|\Phi(t,t_{0})|\leq
C_{1}^{\varepsilon}e^{(t-t_{0})(\eta+\frac{1}{m}\varepsilon)},\quad\forall
t>t_{0}\ge\bar{t},\label{eq:est}\end{equation} and the solution of
the first equation in (\ref{eq:a1}) satisfies\begin{equation}
|v_{1}(t)|\le
C_{1}^{\varepsilon}e^{(\eta+\frac{1}{m}\varepsilon)(t-t_{0})}|v_{1}(t_{0})|.\label{eq:hom-est}\end{equation}

Now the solution to (\ref{eq:a1}) for $i=2,\dots,m$
is\begin{equation}
v_{i}(t)=\Phi(t,t_{0})v_{i}(t_{0})+\sum_{s=t_{0}}^{t-1}\Phi(t,s+1)k_2(s)v_{i-1}(s).\label{eq:soln}\end{equation}
Using (\ref{eq:est}) and (\ref{eq:hom-est}) we
estimate,\begin{eqnarray*} |v_{2}(t)| & \leq
C_{1}^{\varepsilon}e^{(\eta+\frac{1}{m}\varepsilon)(t-t_{0})}|v_{2}(t_{0})|
+C_{1}^{\varepsilon}\overline{k_2}\sum_{s=t_{0}}^{t-1}e^{(\eta+\frac{1}{m}\varepsilon)(t-s-1)}
e^{(\eta+\frac{1}{m}\varepsilon)(s-t_{0})}|v_{1}(t_{0})|\\
 & =C_{1}^{\varepsilon}e^{(\eta+\frac{1}{m}\varepsilon)(t-t_{0})}|v_{2}(t_{0})|+
 C_{1}^{\varepsilon}\overline{k_2}e^{(\eta+\frac{1}{m}\varepsilon)(t-t_{0})}
 e^{-(\eta+\frac{1}{m}\varepsilon)}(t-t_{0})|v_{1}(t_{0})|,\end{eqnarray*}
 where $\overline{k_2}=\sup_t |k_2(t)|$. Adding to (\ref{eq:hom-est}) yields\begin{eqnarray*}
|v_{2}(t)|+|v_{1}(t)| & \le
C_{1}^{\varepsilon}e^{(\eta+\frac{1}{m}\varepsilon)(t-t_{0})}|v_{2}(t_{0})|+C_{1}^{\varepsilon}
e^{(\eta+\frac{1}{m}\varepsilon)(t-t_{0})}(\overline{k_2}e^{-(\eta+\frac{1}{m}\varepsilon)}(t-t_{0})+1)|v_{1}(t_{0})|.\end{eqnarray*}
Since there exists some constant $C_{2}$ (we drop the dependence on
$\varepsilon$ for ease of notation) such that \[
C_{1}^{\varepsilon}e^{(\eta+\frac{1}{m}\varepsilon)(t-t_{0})}(\overline{k_2}e^{-(\eta+\frac{1}{m}\varepsilon)}(t-t_{0})+1)\le
C_{2}e^{(\eta+\frac{2}{m}\varepsilon)(t-t_{0})},\quad\forall t>
t_{0},\] we have\begin{equation} |v_{1}(t)|+|v_{2}(t)|\le
C_{2}e^{(\eta+\frac{2}{m}\varepsilon)(t-t_{0})}(|v_{1}(t_{0})|+|v_{2}(t_{0})|),\quad\forall
t> t_{0}\ge\bar{t}.\label{eq:v2}\end{equation}

For $i=3,$ the argument is similar with some slight modifications:
We use the estimate from (\ref{eq:est}) as\[ |\Phi(t,t_{0})|\leq
C_{1}e^{(t-t_{0})(\eta+\frac{1}{m}\varepsilon)}\le
C_{1}e^{(t-t_{0})(\eta+\frac{2}{m}\varepsilon)}\]
 in (\ref{eq:soln}), while bounding $|v_{2}(t)|$ by the right hand
side of (\ref{eq:v2}). Thus,\[ |v_{3}(t)|\leq
C_{1}e^{(\eta+\frac{2}{m}\varepsilon)(t-t_{0})}|v_{3}(t_{0})|+C_{2}
\sum_{s=t_{0}}^{t-1}e^{(\eta+\frac{2}{m}\varepsilon)(t-s-1)}\overline{k_2}
e^{(\eta+\frac{2}{m}\varepsilon)(s-t_{0})}(|v_{1}(t_{0})|+|v_{2}(t_{0})|).\]
Adding to (\ref{eq:v2}) gives\[ \sum_{i=1}^{3}|v_{i}(t)|\le
C_{3}e^{(\eta+\frac{3}{m}\varepsilon)(t-t_{0})}\sum_{i=1}^{3}|v_{i}(t_{0})|\]
for some constant $C_{3}$. Repeating for $i=4,\dots,m$, we finally
obtain\[
\]
\[
\sum_{i=1}^{m}|v_{i}(t)|\le
C_{m}e^{(\eta+\frac{m}{m}\varepsilon)(t-t_{0})}\sum_{i=1}^{m}|v_{i}(t_{0})|\]
which establishes (\ref{eq:result}) for the $\ell_{1}$-norm, and
thus for all norms in $\mathbb{R}^{m}$ for an appropriate constant
$K_{\varepsilon}$.

To prove the last statement of the theorem, notice that if
(\ref{eq:tbar}) fails for all $t$, then there exist an infinite
sequence $t_{1}<t_{2}<\cdots$ of zeros of $k_1$. In this case, the
equations (\ref{eq:a1}) imply that $v_{i}(t)=0$ for all $t\ge
t_{i}+1$, yielding (\ref{eq:result2}). This completes the proof.
\end{proof}
Now we prove Theorem \ref{15}.
\begin{proof}[Proof of Theorem \ref{15}]
\label{Sec 4.2.} For general pairwise coupling and networks without
quasi-isolated vertices, Eq.~(\ref{1}) can be written in the
following form using the coupling matrix $\mathbf{K}$
\begin{equation}\label{31} x_i(t+1) = f(x_i(t)) + \epsilon
\left(\mathbf{K}\mathbf{g}(\mathbf{x_i}(t),\mathbf{x}(t))\right)_i,
\end{equation}
where the vector $\mathbf{g(x_i,x)}\in \mathbb{R}^n$ is defined as
$\mathbf{g(x_i,x)}:= (g(x_i,x_1),..., g(x_i,x_n))^\top$ and
$\left(\mathbf{K}\mathbf{g}(\mathbf{x_i}(t),\mathbf{x}(t))\right)_i$
is the $i$th component of the vector
$\mathbf{K}\mathbf{g}(\mathbf{x_i}(t),\mathbf{x}(t))$.

Let $\mathbf{x}(t) = (x_1(t), \ldots, x_n(t))$ and $
\mathbf{s}(t)=(s(t),\ldots, s(t))$. Small perturbations
$\mathbf{u}(t) = \mathbf{x}(t) - \mathbf{s}(t)$ of the synchronous
state are governed by the variational equation
\begin{equation}\label{29} \mathbf{u}(t+1) = \left[f^\prime(s(t)) +
\epsilon \partial_1g(s(t),s(t))\right]\mathbf{u}(t) + \epsilon
\partial_2g(s(t),s(t))\mathbf{Ku}(t),
\end{equation}
where $\partial_ig$ denotes the $i$th partial derivative of $g$.

In the sequel we study the coupling matrix $\mathbf{K}$ in Jordan
form. There exists a non-singular matrix $\mathbf{P}$ such that
$\mathbf{K} = \mathbf{PJP}^{-1}$ and $\mathbf{J}$ is of the form
\begin{equation}
 \mathbf{J}  = \left( \begin{array}{l}\mathbf{J}_1
\\ \quad \;\; \mathbf{J}_2 \qquad \quad  \\   \qquad   \qquad \ddots \\    \qquad \quad  \qquad \quad   \mathbf{J}_m
\end{array} \right).
\end{equation}
Each Jordan block $\mathbf{J}_l$ is of the form
\begin{equation} \indent \indent \indent \indent \mathbf{J}_l =
 \left( \begin{array}{*{2}{l@{\quad }l}} \lambda_l  & 1  & & \\ & \ddots&
 \ddots \\ & & &1\\ & & &\lambda_l
\end{array} \right) \in \mathbb{R}^{m_l \times m_l},
\end{equation}
where $m_l$ is the block size of the Jordan block $J_l$. Without
loss of generality we assume that $\mathbf{J}_1$ corresponds to
the eigenvalue $\lambda_1=1$ with eigenvector $\mathbf{e} :=
(1,...,1)^\top$. After the coordinate transformation,
$\mathbf{u}(t) \rightarrow \mathbf{P}^{-1}\mathbf{u}(t) =:
\mathbf{v}(t)$, Eq.~(\ref{29}) becomes:
\begin{equation} \mathbf{v}(t+1) = \left[f^\prime(s(t)) +
\epsilon \partial_1g(s(t),s(t))\right]\mathbf{v}(t) + \epsilon
\partial_2g(s(t),s(t))\mathbf{Jv}(t).
\end{equation}
For each Jordan-block $\mathbf{J}_l$ this  reads in component form:
\begin{equation} \label{12}
v_i(t+1) = \left\{
\begin{array}{l@{\quad \quad}r}
h_l(s(t))v_i(t) + \epsilon
\partial_2g(s(t),s(t))v_{i+1}(t) & i = 1,...,m_{l}-1\\
h_l(s(t))v_i(t) & i = m_l\end{array} \right.
\end{equation}
This is exactly of the form (\ref{eq:a1}) with $k_1 =h_l$ and $k_2
= \epsilon\partial_2g$. We now apply theorem
\ref{thm:linear-convergence}, noting that the time dependance in
(\ref{12}) arises from a trajectory of a time-invariant system, so
choices of initial times are arbitrary. Thus, the perturbations
decay for all Jordan blocks $\mathbf{J}_i$, $i= 2,...,m$. For the
Jordan block $\mathbf{J}_1$ the situation is different because we
do not assume that the mixed longitudinal exponent $\chi_1$
(similarly defined as the mixed transverse exponents $\chi_k$ in
(\ref{30}) for the eigenvalue $\lambda_1 = 1$) is negative. Thus,
$v_1(t)$ does not have to decay. However, we are mainly interested
in the behavior of the original perturbations $\mathbf{u}(t) =
\mathbf{Pv}(t)$. Since $\lambda_1 = 1$ corresponds to the
eigenvector $\mathbf{e} := (1,...,1)^\top$ it is possible to
choose $\mathbf{P}$ such that
\begin{equation}  \mathbf{P} =
 \left( \begin{array}{*{2}{l@{\quad }l}} 1  &p_{12}  &\ldots & p_{1n}\\1&p_{22}&\ldots&p_{2n}\\\vdots &\vdots &
&\vdots
\\1 & p_{n2}&\ldots &p_{nn}
\end{array} \right).
\end{equation}
Thus, $\mathbf{u}(t) = \mathbf{Pv}(t)$ is given by
\[\mathbf{u}(t) = v_1(t)  \mathbf{e} + \left( \begin{array}{c}
\sum_{j=2}^n p_{1j}v_j(t)
\\\vdots\\\sum_{j=2}^n p_{nj}v_j(t)
\end{array} \right).\]  Since all the $v_i(t)$, $i = 2,...,n$ are
decaying we conclude that
\begin{eqnarray*}\lim_{t\rightarrow \infty}|x_i(t)-x_j(t)| &=&
 \lim_{t\rightarrow \infty}|s(t) + u_i(t) - (s(t)
+u_j(t)) | \\&=& \lim_{t\rightarrow \infty}\left|s(t) + v_1(t) +
\sum_{k=2}^n p_{ik}v_k(t) - \left(s(t) + v_1(t) + \sum_{k=2}^n
p_{jk}v_k(t)\right)\right|\\&=&\lim_{t\rightarrow \infty} \left|
\sum_{k=2}^n p_{ik}v_k(t) -  \sum_{k=2}^n p_{jk}v_k(t)\right|= 0
\;\forall \, i,j\;.\end{eqnarray*}
\end{proof}
\begin{remarks}
\begin{itemize}\item []
\item If in addition $\chi_1<0$ then $v_1(t)$ is also decaying. Thus
the synchronous solution is attracting. 
Examples of attracting synchronous solutions are studied in Chapter
\ref{57}.
\item Theorem \ref{15} can also be formulated in terms of the eigenvalues
of the graph Laplacian $\mathcal{L}$. In this case one only has to
replace $\lambda_k$ by $1 - \lambda^\prime_k$, according to equation
(\ref{11}).
\end{itemize}
\end{remarks}

In the following we study some special cases of coupling functions
that appear commonly in applications.

\subsection{Diffusive coupling with both quasi-isolated and non quasi-isolated vertices}
For diffusive coupling a synchronous solution always exists. So,
here we may permit the coexistence of both
 non-quasi-isolated and quasi-isolated vertices in the
network. The general diffusion condition (\ref{13}) implies that
$\partial_1 g(s(t),s(t)) = -
\partial_2g(s(t),s(t))$. Because this leads to a cancellation of the
artificially introduced term in the coupling matrix $\mathbf{K}$ (1
when $i=j$ and $d_i=0$), it follows that, even when there are
non-quasi-isolated and quasi-isolated vertices in the graph, small
perturbations $\mathbf{u}(t) = \mathbf{x}(t) - \mathbf{s}(t)$ are
governed again by the variational equation (\ref{29}), and the same
arguments apply as in section \ref{Sec 4.2.}. In this case the
\textit{k-th mixed transverse exponent for diffusively coupled
units} is given by
\begin{eqnarray} \chi^{\mbox{diff}}_k:= \overline{\lim}_{T\rightarrow\infty}
\frac{1}{T}\sum_{s=\bar{t}}^{\bar{t}+T-1}\log|h^{\mbox{diff}}_k(s(t))|\end{eqnarray}where
\[h^{\mbox{diff}}_k(s(t)) =f'(s(t))) + \epsilon
\partial_2g(s(t),s(t))(\lambda_k-1)\] and $\bar{t}$ is chosen such that $h(s(t))\neq
0$ for all $t>\bar{t}$. If no such $\bar{t}$ exists we set
$\chi_k^{diff} = - \infty$.

For diffusive coupling, Theorem \ref{15} implies that system
(\ref{1}) synchronizes if the maximal transverse exponent satisfies
\begin{equation} \label{43}
\chi^{\mbox{diff}}:= \max_{k \geq 2}\chi^{\mbox{diff}}_k < 0.
\end{equation}

\begin{proposition} \label{45} Assume that the function $f$ is chaotic,
i.e. has a positive Lyapunov-exponent $\mu_f$. If
\begin{itemize}
\item there exists more than one quasi-isolated vertex \\\\ or
\item the network does not possess a spanning tree
\end{itemize}
then the maximal mixed transverse exponent is positive.
\end{proposition}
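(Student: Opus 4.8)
The plan is to produce an index $k\in\{2,\dots,n\}$ with $\lambda_k=1$ and then to show that the exponent attached to it equals $\mu_f$; since $\mu_f>0$ by hypothesis, this immediately gives $\chi=\max_{j\ge2}\chi_j\ge\chi_k=\mu_f>0$.

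The first step is to show that, under either hypothesis, the eigenvalue $1$ of $\mathbf{K}$ — equivalently, by (\ref{11}), the zero eigenvalue of $\mathcal{L}$ — is not simple, so that $\lambda_k=1$ for some $k\ge 2$ in addition to $\lambda_1=1$. This is exactly what Lemma \ref{17} delivers. If the network has more than one quasi-isolated vertex, then $n_1\ge2$, so the first part of Lemma \ref{17} gives $m_0\ge n_1\ge2$ for the multiplicity $m_0$ of the zero eigenvalue of $\mathcal{L}$. If the network has no spanning tree, then the minimum number $n_2$ of trees needed to span it is at least two, so the second part of Lemma \ref{17} gives $m_0\ge n_2\ge2$. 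By (\ref{11}) this $m_0$ is also the multiplicity of the eigenvalue $1$ of $\mathbf{K}$, which therefore exceeds $1$.

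The second step is to evaluate the exponent for such a $k$. Since the coupling is diffusive, differentiating $g(x,x)=0$ from (\ref{13}) along the synchronous orbit gives $\partial_1g(s(t),s(t))+\partial_2g(s(t),s(t))=0$; substituting $\lambda_k=1$ into the multiplier $h_k(s(t))=f'(s(t))+\epsilon\,\partial_1g(s(t),s(t))+\epsilon\,\partial_2g(s(t),s(t))\lambda_k$ of (\ref{30}) (equivalently, into $h^{\mbox{diff}}_k(s(t))=f'(s(t))+\epsilon\,\partial_2g(s(t),s(t))(\lambda_k-1)$) therefore leaves only $f'(s(t))$. Moreover, by (\ref{46}) the synchronous solution obeys $s(t+1)=f(s(t))$, so $s(\cdot)$ is a genuine trajectory of $f$, the zeros of $h_k$ along it coincide with those of $f'$, and the same $\bar t$ works in (\ref{30}) as in (\ref{Lyap}); the two $\overline{\lim}$ expressions are then literally the same sum. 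Hence $\chi_k=\mu_f$, which is positive, and $\chi\ge\chi_k>0$.

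I expect the only real obstacle to be the identification of the right structural fact: it is Lemma \ref{17} together with (\ref{11}) that converts ``more than one quasi-isolated vertex'' and ``no spanning tree'' into ``the eigenvalue $1$ of $\mathbf{K}$ is not simple''. Everything after that is bookkeeping; one should still be a little careful with the $\overline{\lim}$, the choice of $\bar t$, and the $-\infty$ convention in (\ref{30}), but since $\mu_f>0$ precludes $f'$ (hence $h_k$) from vanishing eventually, no issue arises. It is also worth flagging that the diffusive hypothesis is genuinely used: without it, the multiplier for $\lambda_k=1$ would instead reduce to the derivative of the synchronized map $f+\epsilon g(\cdot,\cdot)$ of (\ref{25}), so $\chi_k$ would be a Lyapunov exponent of the synchronized dynamics rather than of $f$, and the conclusion would no longer follow from $\mu_f>0$ alone.
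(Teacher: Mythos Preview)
Your argument is correct and follows essentially the same route as the paper: invoke Lemma~\ref{17} together with (\ref{11}) to obtain $\lambda_k=1$ for some $k\ge2$, and then observe that in the diffusive case this forces $\chi_k^{\mbox{diff}}=\mu_f>0$. You simply supply more justification (the use of (\ref{13}) to collapse $h_k$ to $f'$, the role of (\ref{46}), the handling of $\bar t$) than the paper's terse three-line proof.
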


\begin{proof}
If one of these conditions if fulfilled then by Lemma \ref{17} the
multiplicity of the zero eigenvalue satisfies $m_0(\mathcal{L}) \geq
2$ and hence the multiplicity of the eigenvalue $1$ of $\mathbf{K}$
satisfies $m_1(\mathbf{K})\geq2$. Consequently,
$\chi_k^{\mbox{diff}} = \mu_f
> 0$ for $k = 2,...,m_0$.
\end{proof}
\medskip

In fact, it is intuitively clear that the presence of more than one
quasi-isolated vertex or the absence of a spanning tree will in
general make chaotic synchronization impossible, because in those
situations, there will exist pairs of vertices none of which can
dynamically influence the other. Note, however, that Proposition
\ref{45} does not exclude  the so-called Master-Slave
configurations.

A particular coupling function that arises in coupled map lattice
models \cite{Kaneko-book93} is
\begin{equation}\label{39}  g(x_i,x_j) = b(f(x_j) -
f(x_i))\end{equation} where $b$ is some real constant. In this case
it is possible to separate the effects of the synchronous dynamics
$f$ and the network topology:

\begin{corollary}\label{diff-coupling}
System (\ref{1}) with the coupling function (\ref{39}) synchronizes
if
\begin{equation} \label{47}
\mu_f + \max_{k \geq 2} \;\log|1+ \epsilon b (\lambda_k-1)|<0. 
\end{equation}
\end{corollary}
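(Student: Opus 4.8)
The plan is to specialize Theorem \ref{15} (in the diffusive formulation, inequality (\ref{43})) to the particular coupling function (\ref{39}) and show that the mixed transverse exponents factor into a sum of the Lyapunov exponent $\mu_f$ and a purely topological term. First I would compute the relevant partial derivatives of $g(x_i,x_j) = b(f(x_j) - f(x_i))$ along the synchronous solution $s(t)$: since $\partial_1 g(x_i,x_j) = -b f'(x_i)$ and $\partial_2 g(x_i,x_j) = b f'(x_j)$, evaluating at $x_i = x_j = s(t)$ gives $\partial_1 g(s(t),s(t)) = -b f'(s(t))$ and $\partial_2 g(s(t),s(t)) = b f'(s(t))$. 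In particular (\ref{13}) holds, so we are indeed in the diffusive case and may use $\chi_k^{\mathrm{diff}}$.

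Next I would substitute these into the definition of $h_k^{\mathrm{diff}}$. We get
\[
h_k^{\mathrm{diff}}(s(t)) = f'(s(t)) + \epsilon\,\partial_2 g(s(t),s(t))(\lambda_k - 1) = f'(s(t))\bigl(1 + \epsilon b(\lambda_k-1)\bigr),
\]
so $\log|h_k^{\mathrm{diff}}(s(t))| = \log|f'(s(t))| + \log|1 + \epsilon b(\lambda_k - 1)|$, the second term being a constant independent of $t$. Averaging over $s = \bar t, \dots, \bar t + T - 1$, dividing by $T$, and taking $\overline{\lim}_{T\to\infty}$, the constant term survives unchanged while the first term produces exactly $\mu_f$ as defined in (\ref{Lyap}). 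Hence $\chi_k^{\mathrm{diff}} = \mu_f + \log|1 + \epsilon b(\lambda_k - 1)|$ for each $k$ (one should note that $\bar t$ can be chosen as in the definition of $\mu_f$, since $h_k^{\mathrm{diff}}(s(t)) = 0$ exactly when $f'(s(t)) = 0$, assuming $1 + \epsilon b(\lambda_k - 1) \neq 0$; the degenerate case $1 + \epsilon b(\lambda_k-1) = 0$ gives $\chi_k^{\mathrm{diff}} = -\infty$, which only helps).

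Taking the maximum over $k \geq 2$ then yields $\chi^{\mathrm{diff}} = \mu_f + \max_{k\geq 2}\log|1 + \epsilon b(\lambda_k - 1)|$, so condition (\ref{43}), namely $\chi^{\mathrm{diff}} < 0$, is precisely (\ref{47}). The conclusion now follows directly from Theorem \ref{15} in its diffusive form. I do not anticipate a serious obstacle here; the only point requiring a little care is the bookkeeping around the choice of $\bar t$ and the exceptional eigenvalues for which $1 + \epsilon b(\lambda_k - 1)$ vanishes, but in both the genuine and the degenerate case the transverse exponent is no larger than the stated expression, so the sufficient condition goes through.
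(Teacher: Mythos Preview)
Your proposal is correct and is exactly the intended argument: the paper states this as an immediate corollary without proof, and the derivation is precisely the factorization $h_k^{\mathrm{diff}}(s(t)) = f'(s(t))\bigl(1 + \epsilon b(\lambda_k-1)\bigr)$ that you carry out, followed by an application of condition~(\ref{43}). Your handling of the $\bar t$ bookkeeping and the degenerate case $1+\epsilon b(\lambda_k-1)=0$ is also fine.
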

This result was already obtained in undirected \cite{Jost01} and
 directed \cite{Lu04} networks, in both cases with nonnegative weights.

Assume that the synchronous solution $f$ is chaotic, i.e. $\mu_f>0$.
Then the network topology term in (\ref{47}) has to be sufficiently
negative to compensate the positive Lyapunov exponent $\mu_f$. This
in turn requires that the eigenvalues $\lambda_k$ for $k\ge 2$ be
bounded away from one, and the coupling strength $\epsilon$ lie in
an appropriate interval.

\subsection{Direct coupling and only non-quasi-isolated vertices}

Another important special case of the general coupling function
$g(x,x)$ is  the so-called direct coupling\footnote{We borrow this
term from \cite{Aronson90}.}, where the interactions depend only on
the state of the neighboring units, i.e.
\begin{equation}\label{40}  g(x_i,x_j) = \hat{g}(x_j)
\end{equation}
for some $\hat{g}:\mathbb{R} \rightarrow \mathbb{R}$. Thus
$\partial_1g(s(t),s(t)) = 0$, and
the \textit{k-th mixed transverse exponent for directly coupled
units} is given by
\begin{eqnarray} \chi^{\mbox{direct}}_k:= \overline{\lim}_{T\rightarrow\infty}
\frac{1}{T}\sum_{s=\bar{t}}^{\bar{t}+T-1}\log|h^{\mbox{direct}}_k(s(t))|\end{eqnarray}where
\[h^{\mbox{direct}}_k(s(t)) =f'(s(t))) + \epsilon
\hat{g}'(s(t))\lambda_k.\] and $\bar{t}$ is chosen such that
$h(s(t))\neq 0$ for all $t>\bar{t}$. If no such $\bar{t}$ exists we
set $\chi_k^{diff} = - \infty$.
For direct coupling, Theorem \ref{15} implies that system (\ref{1})
synchronizes if the maximal mixed transverse exponent satisfies
\begin{equation}  \label{44} \chi^{\mbox{direct}}:=\max_{k \geq2} \chi_k^{direct} < 0.
\end{equation}

%
\begin{proposition}\label{42} Assume that
$f + \epsilon \hat{g}$ is chaotic, i.e has a positive
Lyapunov-exponent $\mu_{(f+\epsilon g)}$. If the network does not
possess
a spanning tree then 
maximal mixed transverse exponent $ \chi^{\mbox{direct}}$ is
positive.
\end{proposition}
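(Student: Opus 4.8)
The plan is to follow the template of the proof of Proposition~\ref{45}: I will exhibit an index $k\geq 2$ for which the $k$-th mixed transverse exponent $\chi^{\mbox{direct}}_k$ coincides with the Lyapunov exponent of the synchronous map $F:=f+\epsilon\hat{g}$, so that $\chi^{\mbox{direct}}=\max_{k\geq 2}\chi^{\mbox{direct}}_k\geq\mu_{(f+\epsilon g)}>0$.

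The first step is graph-theoretic. Since the network has no spanning tree, the minimum number $n_2$ of trees needed to span it is at least $2$, so part~\ref{24} of Lemma~\ref{17} gives $m_0\geq 2$ for the multiplicity of the zero eigenvalue of $\mathcal{L}$; by the spectral correspondence (\ref{11}) the eigenvalue $1$ of $\mathbf{K}$ then has multiplicity at least $2$, so some index $k\geq 2$ satisfies $\lambda_k=1$. The second step is to specialise the direct-coupling formula at this $k$: since $h^{\mbox{direct}}_k(s(t))=f'(s(t))+\epsilon\hat{g}'(s(t))\lambda_k$, the choice $\lambda_k=1$ yields
\[
h^{\mbox{direct}}_k(s(t))=f'(s(t))+\epsilon\hat{g}'(s(t))=F'(s(t)),
\]
evaluated along the synchronous orbit $s$, which solves $s(t+1)=F(s(t))$ by (\ref{25}). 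Hence $\chi^{\mbox{direct}}_k$ is, by definition, precisely the Lyapunov exponent $\mu_{(f+\epsilon g)}$ of $F$ along $s$, which is positive by hypothesis, and the conclusion follows.

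The one point that deserves a word is the cutoff time $\bar{t}$ in the definition (\ref{30}): one must rule out $\chi^{\mbox{direct}}_k=-\infty$. This is automatic, because if $F$ is chaotic along $s$ with finite positive exponent $\mu_{(f+\epsilon g)}$, then $F'(s(t))$ cannot vanish for arbitrarily large $t$ (otherwise the exponent would be $-\infty$), so a valid $\bar{t}$ exists and the identification $\chi^{\mbox{direct}}_k=\mu_{(f+\epsilon g)}$ goes through unchanged. Beyond this bookkeeping I do not expect any real obstacle; the statement is essentially a corollary of Lemma~\ref{17} combined with the special structure of direct coupling (whereby the term $\partial_1 g(s(t),s(t))$ drops out), exactly paralleling Proposition~\ref{45}.
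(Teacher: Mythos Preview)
Your proposal is correct and follows exactly the approach the paper intends: the paper's own proof simply reads ``Similar to the proof of Proposition~\ref{45},'' and what you have written is precisely the natural adaptation of that argument to the direct-coupling setting (using Lemma~\ref{17}(\ref{24}) and (\ref{11}) to obtain a second eigenvalue $\lambda_k=1$, then observing that $h_k^{\mbox{direct}}$ collapses to $(f+\epsilon\hat{g})'$ along $s$). Your extra remark about the cutoff time~$\bar{t}$ is a harmless piece of bookkeeping the paper omits.
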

\begin{proof} Similar to the proof of Proposition \ref{45}.
\end{proof}

If in addition $\hat{g}(x) = bf(x)$, it is again possible to
separate the effects of the resulting synchronized dynamics $(1 +
\epsilon b)f$ and the network topology.
\begin{corollary}\label{36} Suppose that $\hat{g}(x) = bf(x)$.
System (\ref{1}) synchronizes if
\begin{equation} \label{8}
\max_{k \geq 2} \;\log\left|\frac{1+ \epsilon b
\lambda_k}{1+\epsilon b}\right| + \mu_{(1+ \epsilon b)f}  < 0.
\end{equation}
\end{corollary}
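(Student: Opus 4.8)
The plan is to deduce Corollary~\ref{36} directly from Theorem~\ref{15}, in its direct-coupling form (\ref{44}), by showing that the left-hand side of (\ref{8}) is \emph{exactly} the maximal mixed transverse exponent $\chi^{\mbox{direct}}$ for this choice of $\hat g$. Throughout I assume $1+\epsilon b\neq 0$, so that (\ref{8}) is meaningful; the case $1+\epsilon b=0$ is degenerate (the synchronous map is identically zero) and can be handled separately if desired. Substituting $\hat g(x)=bf(x)$ into the direct-coupling setup (\ref{40}) gives $\hat g'=bf'$, hence
\[ h_k^{\mbox{direct}}(s(t)) = f'(s(t)) + \epsilon b\,f'(s(t))\,\lambda_k = (1+\epsilon b\lambda_k)\,f'(s(t)), \]
while the synchronous dynamics (\ref{25}) becomes $s(t+1)=(1+\epsilon b)f(s(t))$; in other words $s(t)$ is an orbit of the map $(1+\epsilon b)f$, whose derivative along the orbit is $(1+\epsilon b)f'(s(t))$.

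Next I would split the logarithm. For every index $k$ with $1+\epsilon b\lambda_k\neq 0$,
\[ \log\bigl|h_k^{\mbox{direct}}(s(t))\bigr| = \log|1+\epsilon b\lambda_k| + \log|f'(s(t))|, \]
and since the first summand is constant in $t$, summing over $s=\bar t,\dots,\bar t+T-1$, dividing by $T$, and passing to $\overline{\lim}_{T\to\infty}$ gives
\[ \chi_k^{\mbox{direct}} = \log|1+\epsilon b\lambda_k| + A, \qquad A := \overline{\lim}_{T\to\infty}\frac{1}{T}\sum_{s=\bar t}^{\bar t+T-1}\log|f'(s(t))|. \]
Applying the identical computation to the map $(1+\epsilon b)f$ yields $\mu_{(1+\epsilon b)f}=\log|1+\epsilon b|+A$ with the same $A$. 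Eliminating $A$ between the two relations gives
\[ \chi_k^{\mbox{direct}} = \log\left|\frac{1+\epsilon b\lambda_k}{1+\epsilon b}\right| + \mu_{(1+\epsilon b)f}, \]
and taking the maximum over $k\ge 2$ identifies $\chi^{\mbox{direct}}$ with the left-hand side of (\ref{8}). Thus (\ref{8}) says precisely that $\chi^{\mbox{direct}}<0$, and Theorem~\ref{15} yields synchronization.

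The algebra above is immediate; the only point requiring genuine care is the bookkeeping of the cutoff time $\bar t$ in the definitions of the exponents, and I expect this to be the main (if mild) obstacle. If $1+\epsilon b\lambda_k=0$ for some $k\ge 2$, then $h_k^{\mbox{direct}}\equiv 0$, so $\chi_k^{\mbox{direct}}=-\infty$; the matching term $\log\bigl|(1+\epsilon b\lambda_k)/(1+\epsilon b)\bigr|$ on the left of (\ref{8}) is also $-\infty$, so such indices drop out of both maxima consistently. If instead $f'(s(t))=0$ for infinitely many $t$, then no admissible $\bar t$ exists for any $h_k^{\mbox{direct}}$ with $1+\epsilon b\lambda_k\neq 0$, nor for $(1+\epsilon b)f$, so $\chi^{\mbox{direct}}=-\infty=\mu_{(1+\epsilon b)f}$ and both (\ref{8}) and the hypothesis of Theorem~\ref{15} hold trivially. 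Verifying that these conventions line up is routine but should be noted; once it is, the proof is complete.
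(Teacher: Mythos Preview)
Your proof is correct and follows essentially the same approach the paper intends: the paper does not spell out a proof of Corollary~\ref{36}, but the sentence preceding it (``it is again possible to separate the effects of the resulting synchronized dynamics $(1+\epsilon b)f$ and the network topology'') indicates exactly the factorization $h_k^{\mbox{direct}}(s(t))=(1+\epsilon b\lambda_k)f'(s(t))$ that you carry out. Your careful treatment of the degenerate cases ($1+\epsilon b\lambda_k=0$, or $f'(s(t))=0$ infinitely often) is more explicit than anything in the paper and is a welcome addition.
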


Similar to the case of diffusively coupled units, if the resulting
synchronous solution $(1 + \epsilon b)f$ is chaotic, then  the
eigenvalues $\lambda_k$ should be bounded away from one, and the
coupling strength $\epsilon$ lie in an appropriate interval.

\section{Emergence of synchronized chaos in directly coupled networks \label{22}}\label{6}
As mentioned in the introduction, in diffusively-coupled units, the
synchronized network behaves exactly as a single unit would do in
isolation. 
Hence, no new collective behavior is gained from synchronization:
Either the units are dynamically complex, then so is the whole
network, - or  the units are dynamically simple then the behavior of
the whole network will remain simple. 

Eq.~(\ref{25}) shows that the requirement for new emerging
collective behavior is that the general diffusion condition
Eq.~(\ref{13}) is not satisfied. 
 In this way,  the network can display new behavior that the single
isolated units are not capable to show. In particular we shall see
in this section that synchronous chaos can emerge in a network of
simple units. 

One task in showing that simple units can exhibit synchronized chaos
is to rigorously show that the synchronized solution is indeed
chaotic. To this end, we consider S-unimodal maps.

\subsection{A full family of S-unimodal maps}

In general it is hard to prove that a function is chaotic. However
there is one well-understood class of chaotic functions, namely the
so-called full families of S-unimodal maps \cite{Devaney89}. For
convenience, we recall the notion of a full family of S-unimodal
maps in the Appendix \ref{Appendix}.

\textbf{Properties of full families of S-unimodal maps:}\\It is
well-known that a full family of S-unimodal maps undergoes a
sequence of period doubling bifurcations as $\mu$ varies from
$\mu_0$ to $\mu_1$ and finally becomes chaotic \cite{Devaney89,
Whitley83}. \\\\As the main result of this section, we present a
full family of S-unimodal maps.

\begin{theorem}\label{38} Consider the family $f_\mu$  of maps
\begin{equation}\label{54}
f_\mu(x) = \left( \frac{1 -e^{-\mu}}{1 + e^{-\mu}} x+ 1
-\frac{2}{1+e^{-\mu x}}\right)\Theta,\label{41}
\end{equation}
where $\Theta < 0$. This family is a full family of S-unimodal maps
on [0,1] with $\mu \in [\mu_0 = 0,\mu_1]$ where we choose $\mu_1$
such that $f_{\mu_1}(c) = 1$.
\end{theorem}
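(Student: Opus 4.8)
The plan is to put $f_\mu$ into a hyperbolic normal form and then verify, one by one, the defining properties of a full family of S-unimodal maps recalled in Appendix~\ref{Appendix}: $C^3$-smoothness, the boundary conditions $f_\mu(0)=f_\mu(1)=0$, unimodality with a unique non-degenerate turning point, negativity of the Schwarzian derivative off that point, the invariance $f_\mu([0,1])\subseteq[0,1]$, and the ``full'' behaviour at the endpoints $\mu_0=0$ and $\mu_1$. First I would use the identity $1-\frac{2}{1+e^{-y}}=-\tanh(y/2)$ and set $\alpha=\mu/2$, $\beta=\tanh\alpha$; then
\[
f_\mu(x)=\Theta\,\phi_\mu(x),\qquad \phi_\mu(x):=\beta x-\tanh(\alpha x),
\]
so $f_\mu$ is $C^\infty$ on $\mathbb{R}$, and $\phi_\mu(0)=\phi_\mu(1)=0$ gives $f_\mu(0)=f_\mu(1)=0$.

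Next, for unimodality I would compute $\phi_\mu'(x)=\beta-\alpha\,\mathrm{sech}^2(\alpha x)=-\gamma+\alpha\tanh^2(\alpha x)$ with $\gamma:=\alpha-\beta>0$. Since $\tanh^2(\alpha x)$ is strictly increasing on $[0,1]$, so is $\phi_\mu'$, with $\phi_\mu'(0)=-\gamma<0$ and $\phi_\mu'(1)=\beta-\alpha\,\mathrm{sech}^2\alpha>0$, the last inequality being equivalent to $\sinh\mu>\mu$ and hence holding for all $\mu>0$. Thus there is a unique $c=c(\mu)\in(0,1)$ with $\phi_\mu'(c)=0$, and since $\Theta<0$ the map $f_\mu$ is strictly increasing on $[0,c]$ and strictly decreasing on $[c,1]$, with $f_\mu'(0)\neq0$ and $f_\mu''(c)=2\Theta\alpha^2\,\mathrm{sech}^2(\alpha c)\tanh(\alpha c)\neq0$ (non-degenerate turning point). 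At $\mu_0=0$ the family degenerates to the constant map $f_0\equiv0$, the trivial end.

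The hard part is the Schwarzian. Because the Schwarzian derivative is unchanged under post-composition with the affine map $y\mapsto\Theta y$, it suffices to show $S\phi_\mu<0$ on $[0,1]\setminus\{c\}$, i.e. $\phi_\mu'''\phi_\mu'-\frac32(\phi_\mu'')^2<0$ there. Writing $u=\tanh(\alpha x)\in[0,\beta]$, one has $\phi_\mu'=\alpha u^2-\gamma$, $\phi_\mu''=2\alpha^2u(1-u^2)$, $\phi_\mu'''=2\alpha^3(1-u^2)(1-3u^2)$, and — the pleasant surprise — after expansion the quartic terms cancel, leaving
\[
\phi_\mu'''\phi_\mu'-\frac32(\phi_\mu'')^2=2\alpha^3(1-u^2)\left[(3\gamma-2\alpha)u^2-\gamma\right].
\]
Since $1-u^2>0$, the sign is that of the affine function $B(z):=(3\gamma-2\alpha)z-\gamma$ at $z=u^2\in[0,\beta^2]$. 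Here $B(0)=-\gamma<0$, and a direct simplification gives $B(\beta^2)=-\left(\alpha\,\mathrm{sech}^2\alpha+\tanh\alpha\,(3\tanh^2\alpha-1)\right)$, which is negative for every $\alpha>0$: if $\tanh^2\alpha\geq\frac13$ both bracketed terms are nonnegative, while if $\tanh^2\alpha<\frac13$ one uses $\alpha>\tanh\alpha$ together with $1-\tanh^2\alpha>1-3\tanh^2\alpha>0$. Hence $B<0$ on all of $[0,\beta^2]$, so $S\phi_\mu<0$ on $[0,1]\setminus\{c\}$ (with $S\phi_\mu\to-\infty$ as $x\to c$), and every $f_\mu$ with $\mu>0$ is S-unimodal.

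It remains to check the range and the fullness conditions. Since $\phi_\mu\leq0$ on $[0,1]$ with zeros only at the endpoints, $f_\mu=\Theta\phi_\mu\geq0$ and $f_\mu([0,1])=[0,M(\mu)]$ with $M(\mu):=f_\mu(c(\mu))$. Differentiating and using $\phi_\mu'(c)=0$ shows $M'(\mu)>0$ (it reduces to $\mathrm{sech}^2(\alpha c)>\mathrm{sech}^2\alpha$, i.e. $\alpha c<\alpha$), so $M$ increases strictly from $M(0)=0$; therefore choosing $\mu_1$ with $M(\mu_1)=1$ forces $f_\mu([0,1])\subseteq[0,1]$ for all $\mu\in[0,\mu_1]$, with $f_{\mu_1}$ onto. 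Finally $f_{\mu_1}(c)=1$ and $f_{\mu_1}(1)=0=f_{\mu_1}(0)$ show that the critical orbit of $f_{\mu_1}$ lands on the fixed point $0$, which is the required endpoint condition, while the member at $\mu_0=0$ is trivial and $\mu\mapsto f_\mu$ is plainly continuous. The only step demanding real care is the Schwarzian estimate, and the quartic cancellation collapses it to the elementary inequality above; everything else is routine verification.
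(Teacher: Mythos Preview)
Your proof is correct and takes a genuinely different route from the paper's. The paper computes the Schwarzian directly in the exponential variable $e^{-\mu x}$ and reduces negativity to showing that $g_\mu(x)=1+e^{-2\mu x}+(\mu/a-4)e^{-\mu x}>0$, which it establishes by locating the interior critical point of $g_\mu$, evaluating there, and then separately treating the boundary values $x=0,1$ via a further optimisation in $\mu$. You instead pass to the hyperbolic normal form $\phi_\mu(x)=\beta x-\tanh(\alpha x)$ and the substitution $u=\tanh(\alpha x)$; the quartic cancellation you observe collapses $\phi_\mu'''\phi_\mu'-\tfrac32(\phi_\mu'')^2$ to $2\alpha^3(1-u^2)\bigl[(3\gamma-2\alpha)u^2-\gamma\bigr]$, so the whole Schwarzian estimate reduces to checking an \emph{affine} function of $u^2$ at the two endpoints $0$ and $\beta^2$, avoiding any interior optimisation. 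This is tidier and more transparent than the paper's calculation, and your endpoint inequality $\alpha\,\mathrm{sech}^2\alpha+\tanh\alpha(3\tanh^2\alpha-1)>0$ is handled by an elementary two-case argument. You also verify explicitly the unimodality, the non-degeneracy of the turning point, the range condition $f_\mu([0,1])\subseteq[0,1]$ via monotonicity of $M(\mu)=f_\mu(c(\mu))$, and the kneading condition at $\mu_1$, all of which the paper dismisses as ``straightforward'' or ``obvious''. One small caveat, shared with the paper: the existence of $\mu_1$ with $f_{\mu_1}(c)=1$ is taken as given by the statement rather than proved; your monotonicity of $M$ shows that if such $\mu_1$ exists it is unique and the family stays in $[0,1]$ for $\mu\le\mu_1$, but neither argument establishes that $M$ actually reaches $1$ (which in fact requires $|\Theta|>1$).
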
The proof is given in Appendix \ref{Appendix B}.
\\\\
We can use the full family of S-unimodal maps (\ref{54}) to
demonstrate that synchronous chaotic behavior can emerge in a
network of simple units. Let the individual dynamics be given by
$f(x) = \left(\frac{1-e^{-\mu}}{1+ e^{-\mu}}x + 1\right) \Theta$ and
the interaction between units by  $\hat{g} (x)=
-\frac{2\Theta}{1+e^{-\mu x}}$. Clearly, the individual dynamics are
very simple, as $f(x)$ has one single attracting fixed point. The
resulting synchronous solution is \begin{equation}\label{58} s(t+1)
= \left(\frac{1-e^{-\mu}}{1+ e^{-\mu}}s(t) + 1\right)\Theta -
\frac{2\epsilon\Theta}{1+e^{-\mu s(t)}}.
\end{equation}
If $\epsilon =1$ and $\Theta <0$ this is exactly of the form
(\ref{54}). Fig.~\ref{Fig.14} supports our finding in Theorem
\ref{38} and shows that Eq.~(\ref{58}) is chaotic for a wide range
of $\epsilon$-values.

For $\Theta>0$ the family of maps (\ref{54}) is not a full family of
S-unimodal maps. However, computer simulations indicate that even in
this case the Lyapunov exponent of (\ref{54}) can be positive
depending on the parameter values of $\Theta, \mu$ and $\epsilon$.
For positive values of $\Theta$, Eq.~(\ref{54}) can be used to model
neural networks \cite{Bauer08b}.

\begin{figure}\label{Fig.14}
\begin{center}
{\includegraphics[width = 8cm]{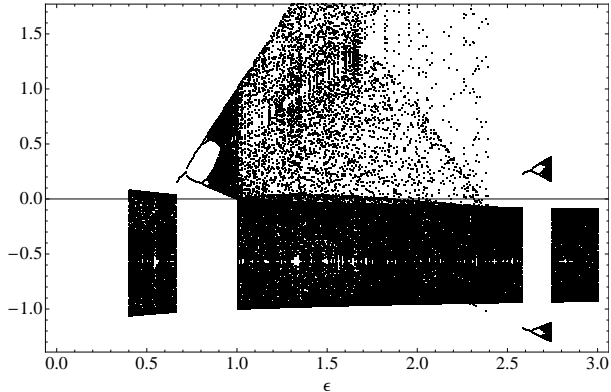}}
\end{center}
\caption{ Bifurcation diagram for Eq.~(\ref{41}) with parameter
values $\Theta = -1.3041$ and $\mu = 20$.}
\end{figure}

Another example of a full family of S-unimodal maps is given by the
familiar logistic maps. In the next section we study in detail the
emergent synchronous chaotic dynamics in networks of coupled
logistic maps.

\subsection{Coupled logistic maps}

It is well-known that the logistic map \[\ell_\rho(x) = \rho x(1-x),
\;\; \rho\in[0,4] \mbox{ and } x\in [0,1]\] is a full family
of S-unimodal maps. 
The logistic map is probably the best analyzed chaotic map; still
not everything is understood rigorously. For convenience we briefly
recall some properties. 
For $\rho = 2$ the dynamics of the logistic map are very simple,
with the fixed point $x = 1/2$ attracting all points in the open
interval $(0,1)$. The logistic map $\ell_\rho$ undergoes a period
doubling route to chaos \cite{Devaney89}. At $\rho = 3$ the first
period-doubling occurs, followed by further period-doubling
bifurcations with increasing values of $\rho$, which accumulate at
$\rho \approx 3.57$. For $\rho > 3.57$ the logistic map can be
chaotic but there are also so-called periodic windows in the
parameter interval $\rho \in (3.57, 4]$. For $\rho = 4$ it is
maximally chaotic with a Lyapunov exponent $\mu_{\ell_4} = \ln 2$.


We consider a network of coupled logistic maps. Let $f(x) =
a_1x(1-x)$, $\hat{g}(x) = a_2x(1-x)$ and
\begin{equation} \label{26} \epsilon =
\frac{\rho-a_1}{a_2}.\end{equation} Assume that $a_1, a_2\; \in
(0,4)$ and $x(0) \in [0,1]$. Then the dynamics of the synchronized
solution is given by
\begin{equation} \label{27} s(t+1) = \ell_\rho(s(t)) = \rho
s(t)(1-s(t)).\end{equation} For given $a_1$ and $a_2$  the coupling
constant $\epsilon$ can be used to control the dynamics of the
synchronous solution, i.e. $\epsilon$ can be used as bifurcation parameter. 
In particular, synchronous chaotic behavior can emerge in the whole
network, even if the individual dynamics are very simple.

By an application of Corollary \ref{36}, the directly coupled
network of logistic maps synchronizes if
\begin{equation}\label{16} \max_{k \geq 2} \log\left|1 -
\left(1-\frac{a_1}{\rho}\right)\lambda^\prime_k\right|+
\mu_{\ell_\rho} < 0.\end{equation} For illustration we consider some
concrete examples.

\begin{example}\label{32}
Let $f(x) = \hat{g}(x) = \ell_2(x)$. Choosing $\epsilon = 1$ yields
$\rho = 4$, and the synchronous solution $s(t) = \ell_4(t)$ becomes
maximally chaotic. Thus, the whole network displays complicated
dynamics although each unit of the network itself is dynamically
simple.

Eq.~(\ref{16}) implies that the network synchronizes if all
eigenvalues of the graph Laplacian (except $\lambda_1' = 0$) are
contained in $\mathcal{D}(2,1)$.  Within the class of undirected
graphs with nonnegative weights, the latter condition can only be
satisfied for complete graphs \cite{Chung97}. However, for directed
graphs or in the case of mixed signs there exist also non-complete
graphs satisfying this condition \cite{Atay08}.
\end{example}
In the next example we choose different functions $f$ and $\hat{g}$
that lead to the same synchronous solution. Interestingly, the
condition on the eigenvalues is this time different than in the case
of Example \ref{32}.

\begin{example}\label{55}Let $f(x) = \ell_1(x)$ and $\hat{g} = \ell_3(x)$. Choosing
$\epsilon = 1$ implies that the synchronous solution is the same as
in Example \ref{32}, i.e. $s(t) = \ell_4(x)$.

The network synchronizes if all eigenvalues (except  $\lambda_1' =
0$) of the graph Laplacian are contained in
$\mathcal{D}(\frac{4}{3}, \frac{2}{3})$. In contrast to Example
\ref{32}, there also exists non-complete undirected graphs with
positive weights that satisfy the latter condition.
\end{example}

Comparing Example \ref{32} and Example \ref{55} shows one
interesting point. In both Examples the synchronous solution
$s(t)$ and the overall coupling strengths are identical. However,
in Example \ref{32} the network synchronizes if all eigenvalues
are contained in $\mathcal{D}(2,1)$ whereas the same is true for
Example \ref{55} if all eigenvalues are contained in
$\mathcal{D}(\frac{4}{3}, \frac{2}{3})$.  Thus, in contrast to
diffusively coupled units, the synchronizability of a directly
coupled network is not completely determined by the synchronous
solution $s(t)$, the overall coupling strength $\epsilon$ and the
network topology (i.e. the eigenvalues of the coupling matrix),
but depends also on the special choices of $f$ and $\hat{g}$.
Clearly, this can already be seen from the definition of
$\chi_k^{\mbox{direct}}.$

\section{Suppression of chaos}\label{57}

Besides the emergence of chaos in  a network of simple units, the
opposite is also possible. In this section we show that
non-diffusive coupling can also be used to suppress chaos. 
Chaos suppression in single systems is a well-established field; for
an overview see \cite{Schöll07} and the references therein. In our
setting, suppression arises in networks through synchronization of
the states of the units.

As a first example, we consider a network of directly coupled
chaotic logistic maps.

\begin{example} \label{51}
Let $f(x) = \hat{g}(x) = \ell_4(x)$ be given and choose $\epsilon =
\left(\frac{2.1}{4} -1\right)$. This choice implies that the
synchronous behavior is given by $s(t) = \ell_{2.1}(t)$ and is
non-chaotic. Corollary \ref{36} implies that the network
synchronizes if all eigenvalues of the graph Laplacian are contained
in $\mathcal{D}(-1.11, 10.99)$.
\end{example}

Example \ref{51} shows that the network synchronizes for a wide
range of eigenvalues $\lambda^\prime_k$. Compared to synchronous
chaotic behavior, studied in Example \ref{32} and \ref{55}, the
radius of the disk $\mathcal{D}$ is much larger for simple
synchronous behavior. The reason can be seen from condition
(\ref{8}):
A smaller  Lyapunov exponent for the synchronous behavior implies
less
restrictions on the allowable values of $\lambda_k'$. 

Lemma \ref{48} implies that all eigenvalues of $\mathcal{L}$ are
contained in the disk $\mathcal{D}(1,r)$. For sufficiently small $r$
the latter disk is contained in $\mathcal{D}(-1.11, 10.99)$. So,
instead of calculating all eigenvalues of the coupling matrix, the
knowledge of the radius $r$ can be sufficient to determine whether
the network synchronizes. Hence, the quantity $r$ already gives some
insights concerning the robustness of the synchronous state.

Motivated by the above example, we will give in the next section
sufficient conditions for synchronization in terms of $r$. In fact,
these conditions will not depend on the eigenvalues of the coupling
matrix. Again, we want to point out that this is only possible if
the synchronous behavior is not chaotic.



\subsection{Direct coupling}
We restrict ourselves to the case of direct coupling where
$\hat{g}(x) = bf(x)$. It follows from Corollary \ref{36} and Lemma
\ref{48} that the network synchronizes if
\begin{equation} \label{223} \mu_{(1+
\epsilon b)f} < \log \left| \frac{1 + \epsilon b}{1 + |\epsilon
b|r}\right|.\end{equation}  Again, we see that a simpler synchronous
behavior in the sense of a smaller Lyapunov exponent implies that
the network synchronizes for a large class of network topologies.

\begin{example}\label{56} Let $f(x) = \hat{g}(x) = \ell_4(x)$ and choose $\epsilon = \left(
\frac{\rho}{4} -1\right)$. This choice implies that $s(t) =
\ell_\rho(t)$. By Eq.~(\ref{223}), the network synchronizes if
\begin{equation} \label{224}
\mu_{\ell_\rho} +\log\left|\frac{4}{\rho} +
\left(\frac{4}{\rho}-1\right)r\right|< 0.
\end{equation} The left-hand-side of (\ref{224}) is plotted in Figure \ref{Fig.23}.
\begin{itemize}
\item For $\rho = 2$, $x^* = 1/2$ is an attracting fixed point
of $s(t) = \ell_2(t)$ that attracts all points in the open interval
$(0,1)$. For $x(t=0)=x_0 \in (0,1)$ we have
\[x(t) = 1/2[1-(1-2x_0)^{2t}].\] 
Because of this fast convergence to $\frac{1}{2}$, the fact that
$\ell^\prime_2(\frac{1}{2}) = 0$ implies that $\mu_{\ell_2} = -
\infty$.\footnote{Because of this property, some authors call the
fixed point $\frac{1}{2}$ superstable. \cite{Froyland92}.} Thus the
network synchronizes for all network topologies.

\item For $\rho = 2.1$ and $\rho = 3.25$ the solid and dot-dashed  line in Fig.~\ref{Fig.23} show
that the network synchronizes  for all network topologies that
satisfy $r <8$ or $r<12$ respectively. In comparison to Example
\ref{51} here we only use the information given by $r$ instead of
using all eigenvalues of the coupling matrix. \item For $\rho = 2.5$
the dashed line in Fig.~\ref{Fig.23} shows that the function
$\mu_{\ell_{2.5}} +\log\left|\frac{4}{2.5} +
\left(\frac{4}{2.5}-1\right)r\right|$ is never negative. Hence
estimates based on $r$ are too crude to obtain synchronization
information, however in such a case we can use Corollary \ref{36} to
conclude that the network synchronizes if all eigenvalues
$\lambda_k^\prime$ of $\mathcal{L}$ (except $\lambda_1^\prime = 0$)
are contained in a disk centered at $-5/3$ with a radius $10/3$.
\end{itemize}

\begin{figure}
\begin{center}
\includegraphics[width = 10cm]{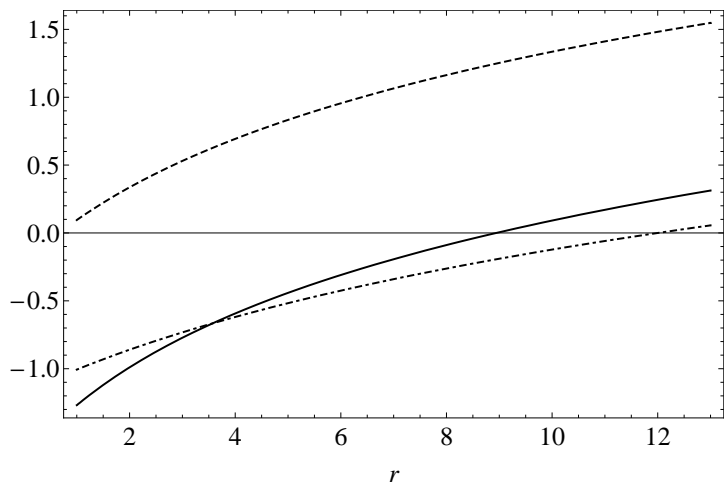}
\caption{\label{Fig.23}Plot of  $\mu_{\ell_{\rho}} +
\log\left|\frac{4}{\rho} + \left(\frac{4}{\rho} -1\right)r\right|$
as a function of $r$. The dot-dashed line corresponds to the
parameter value $\rho = 3.25$, the dashed line to $\rho = 2.5$ and
the solid line to $\rho = 2.1$. }
\end{center}
\end{figure}
\end{example}

Finally, we give an example of synchronization independent of
network topology when all the weights have the same sign.

\begin{example}
Let $\hat{g}(x) = bf(x)$ be  arbitrarily maps. Assume that  all
weights in the network are either nonnegative or nonpositive, i.e.
$r = 1$. If $\epsilon b\geq 0$ then condition (\ref{223}) is
satisfied if
\[\mu_{((1+\epsilon b)f)} < \log 1 = 0.\] If $\epsilon b< 0$ then condition (\ref{223}) is
satisfied if \[\mu_{((1+\epsilon b)f)} < \log \left| \frac{1 +
\epsilon b}{1 -\epsilon b}\right|.\]
\end{example}

\section{Conclusion and discussion \label{23}}

In this work we studied complete synchronization in coupled map
networks of identical units. We have generalized synchronization
analysis to directed networks with both positive and negative
weights and general pairwise
coupling functions. 
This generalization is especially important for analyzing real-world
networks. For example, in neural networks the coupling function is
 non-diffusive. Excitation and inhibition are modeled by positive and negative
weights, respectively, and only the pre-synaptic neuron influences
the post-synaptic one but not vice versa. Thus, directed networks
with signed weights are needed to model neural networks in an
appropriate way.

We have derived sufficient conditions for local synchronization,
which are expressed in terms of the maximal mixed transverse
exponent. We have shown that, in contrast to diffusively coupled
units, non-diffusively coupled networks can display new synchronized
behavior that the individual unit is not able to show. The new
synchronous behavior is influenced by the overall coupling strength
that plays the role of a bifurcation parameter. Again, this is
important in neural networks. Changing the coupling strength in a
learning process allows the network to learn new, possibly much
richer behavior than before.  In particular,
 synchronized chaotic behavior can arise
in networks of non-chaotic units with non-chaotic interactions
functions. Conversely, chaos can be suppressed through
synchronization in networks of chaotic units. This may have
practical implications in the field of chaos control. Often chaos is
controlled by applying time-delayed external feedback. In our
approach chaos can be controlled by changing an internal parameter
of the system.

There exists natural extensions of this work. One direction is to
generalize the synchronization analysis to networks of
non-identical units. Here the problem is more challenging, since
complete synchronization is usually not possible, and one must
look for other suitable types of solutions and investigate their
stability. Other directions for extension include
higher-dimensional and/or continuous-time systems. In all cases,
the possibility of the emergence of new dynamics via
synchronization offers a helpful perspective in our efforts to
understand complex systems.

\begin{appendix}\section{Definition of a full family of S-unimodal maps}\label{Appendix}
\begin{definition}[Unimodal Map] Let $f:I=[0,1]\rightarrow I=[0,1]$. The map
is unimodal if \begin{enumerate}\item $f(0) = f(1) = 0$ \item $f$
has a unique critical point $c$ (i.e. $f'(c) =0$) with $0< c< 1$.
\end{enumerate}
\end{definition}

\begin{definition}[Schwarzian Derivative]
The Schwarzian derivative of a function $f$ at $x$ is \[Sf(x) =
\frac{f'''(x)}{f'(x)} - \frac{3}{2}
\left(\frac{f''(x)}{f'(x)}\right)^2.\]
\end{definition}
The important role of a negative Schwarzian derivative was
discovered by Singer \cite{Singer78}. A negative Schwarzian
derivative restricts the number of stable periodic orbits. Singer
showed that a unimodal map whose Schwarzian derivative is negative
has at most one stable periodic orbit. Furthermore, if the critical
point is not attracted to a stable periodic orbit then the map has
no stable orbit at all.
\begin{definition}[Itinerary]
Let $x \in I$. The itinerary of $x$ under $f$ is the infinite
sequence $\mathcal{S}(x) = (s_0s_1s_2,...)$ where \[ s_j = \left\{
\begin{array}{r@{\quad if \quad}l} 0 & f^j(x) < c \\ 1 & f^j(x) >
c \\ C & f^j(x) = c.
\end{array} \right.
\]
\end{definition}
The idea of kneading sequences goes back to Milnor and Thurston
\cite{Milnor77}.\begin{definition} [Kneading Sequence] The kneading
sequence $K(f)$ of $f(x)$ is the itinerary of $f(c)$, i.e., $K(f) =
\mathcal{S}(f(c))$.
\end{definition}
\begin{definition}[Full family of S-unimodal maps] \label{FullFamily} Let $f_\mu$ be a family of unimodal maps with
$\mu_0 \leq \mu \leq \mu_1$. $f_\mu$ is called a full family of
S-unimodal maps if
\begin{enumerate}\item $f_{\mu_0}(x) \equiv 0$ for all $x \in I$.
\item When $\mu = \mu_1$, $K(f_\mu) = (100\overline{0}...)$. \item
$Sf_\mu(x) < 0 $ for all $\mu > \mu_0$ and $x\in I$.
\end{enumerate}
\end{definition}

\section{Proof of Theorem \ref{38}}\label{Appendix B}
\begin{proof} It is straightforward to show
that $f_\mu$ is unimodal for all $\mu$. The first two points in
Definition \ref{FullFamily} obviously hold because $f^2_{\mu_1}(c) =
0$ and $0$ is a fixed point. So we only have to prove that
 $Sf_\mu(x) < 0 $ for all $\mu > \mu_0= 0$.
We calculate the following derivatives.
\begin{eqnarray*}&f'_\mu(x)& = \frac{1
-e^{-\mu}}{1 + e^{-\mu}}\Theta - \frac{2\mu \Theta e^{-\mu
x}}{(1+e^{-\mu x})^2}\\
&f''_\mu(x)& = 2\mu^2 \Theta e^{-\mu x} \left[ \frac{1}{(1+e^{-\mu
x})^2} -
\frac{2e^{-\mu x}}{(1+e^{-\mu x})^3}\right]\\
&f'''_\mu(x)&= 2\mu^3 \Theta e^{-\mu x} \left[ -\frac{1}{(1+e^{-\mu
x})^2} + \frac{6e^{-\mu x}}{(1+e^{-\mu x})^3} -\frac{6e^{-2\mu
x}}{(1 +e^{-\mu x})^4}\right].
\end{eqnarray*}  Putting $a := \frac{1
-e^{-\mu}}{1 + e^{-\mu}}$ and $b := a(1+e^{-\mu x})^2 - 2\mu e^{-\mu
x}$, we have:
\begin{eqnarray*}&\frac{f'''_\mu(x)}{f'_\mu(x)}& = \begin{array}{l} \frac{1}{b^2}\left[-2a\mu^3 e^{-\mu x} (1 +
e^{-\mu x})^2 + 4 \mu ^4 e^{-2\mu x}+ 12 a\mu^3 e^{-2\mu x}(1 +
e^{-\mu x}) \right.
\\ - \left.\frac{24 \mu^4 e^{-3\mu x}}{1 + e^{-\mu x}}
-12 a\mu^3 e^{-3\mu x} + \frac{24\mu^4 e^{-4\mu x}}{(1 + e^{-\mu
x})^2}\right]\end{array}\\&-\frac{3}{2}\left(\frac{f''_\mu(x)}{f'_\mu(x)}\right)^2&
= \frac{1}{b^2} \left[-6\mu^4e^{-2\mu x} + \frac{24\mu^4e^{-3\mu
x}}{1 + e^{-\mu x}} - \frac{24\mu^4e^{-4\mu x}}{(1 + e^{-\mu x})^2}
\right]
\end{eqnarray*}Thus, the Schwarzian derivative
\begin{eqnarray*}Sf_\mu(x) &=& \frac{f'''_\mu(x)}{f'_\mu(x)}
-\frac{3}{2}\left(\frac{f''_\mu(x)}{f'_\mu(x)}\right)^2\\ &=&
\underbrace{\frac{-2a\mu^3e^{-\mu x}}{b^2}}_{<0} \underbrace{\left[1
+ e^{-2\mu x} + \left(\frac{\mu}{a} -4\right)e^{-\mu x} \right]}_{
=: g_\mu(x)}
\end{eqnarray*}
is negative if $g_\mu(x)
> 0$. The critical point of $g_\mu(x)$ is given by
\[c = \frac{\ln\left( 2- \frac{\mu}{2a} \right)}{-\mu}.\]
It is easy to verify that this critical point $c$ is actually a
minimum of the function $g_\mu(x)$. First, we assume that $c\in
(0,1)$. This leads to
\begin{equation} \label{35} e^{-\mu} <  2- \frac{\mu}{2a} <1 .\end{equation}
Inserting the critical point yields:
\begin{eqnarray*}g_\mu(c) &=& 1+ \left( 2-\frac{\mu}{2a}\right)^2+\left(\frac{\mu}{a} -4 \right)\left(2-
\frac{\mu}{2a}\right) \\&=& -3 + 2\frac{\mu}{a} -
\left(\frac{\mu}{2a}\right)^2\\ &=& 1 - \underbrace{\left(2 -
\frac{\mu}{2a}\right)^2}_{< \,1}> 0
\end{eqnarray*}
So we only have to check that $g_\mu(x)>0$ at the boundary. First we
consider the case $x = 0$.
\begin{eqnarray*} g_\mu(0) = -2 + \frac{\mu}{a} = -2 + \frac{1
+e^{-\mu}}{1 - e^{-\mu}}\mu
\end{eqnarray*}
Calculating the minimum of $g_\mu(0)$ with respect to $\mu$ yields:
\begin{eqnarray*}\frac{dg_\mu(0)}{d\mu} &=& \frac{1 + (1-\mu)e^{-\mu}}{1 -
e^{-\mu}} - \frac{\mu e^{-\mu}(1+e^{-\mu})}{(1- e^{-\mu})^2} \stackrel{!}{=} 0 \\
&\Rightarrow& 1  -2\mu e^{-\mu} - e^{-2\mu} = 0
\end{eqnarray*}
Hence, we see that $\mu = 0$ is the minimum. Using l'Hospital's rule
we obtain:
\[ \lim_{\mu\searrow 0}g_\mu(0) =  -2 + \lim_{\mu \searrow 0}  \frac{\mu (1+ e^{-\mu})}{1-e^{-\mu}}
=-2 +  \lim_{\mu \searrow 0} \frac{1 + e^{-\mu} - \mu
e^{-\mu}}{e^{-\mu}} \searrow 0\] This shows that  $g(0) > 0$ if $\mu
>0$.
The case $x = 1$ is treated in the same way.  Thus the family
(\ref{54}) is a full family of S-unimodal maps.
\end{proof}
\end{appendix}


\begin{thebibliography}{10}

\bibitem{Arenas08}
A.~Arenas, A.~Diaz-Guilera, J.~Kurths, Y.~Moreno, and C.~Zhou.
\newblock Synchronization in complex networks.
\newblock {\em Physics Reports}, 469:93--153, 2008.

\bibitem{Chavez07}
M.~Chavez, D.~Hwang, and S.~Boccaletti.
\newblock Synchronization processes in complex systems.
\newblock {\em Eur. Phys. J. Special Topics}, 146:129--144, 2007.

\bibitem{Pikovsky01}
A.~Pikovsky, M.~Rosenblum, and J.~Kurths.
\newblock {\em Synchronization: A universal concept in nonlinear sciences}.
\newblock Cambridge University Press, Cambridge, 2001.

\bibitem{Kaneko84}
K.~Kaneko.
\newblock Period-doubling of kink-antikink patterns, quasi-periodicity in
  antiferro-like structures and spatial intermittency in coupled map lattice-
  toward a prelude to a field theory of chaos.
\newblock {\em Prog. Theoret. Phys.}, 72:480--486, 1984.

\bibitem{Jost01}
J.~Jost and M.~Joy.
\newblock Spectral properties and synchronization in coupled map lattices.
\newblock {\em Physical Review E}, 65(016201), 2001.

\bibitem{Lu04}
W.~Lu and T.~Chen.
\newblock Synchronization analysis of linearly coupled networks of discrete
  time systems.
\newblock {\em Physica D}, 198:148--168, 2004.

\bibitem{Atay06}
F.~M. Atay and \"O. Karabacak.
\newblock Stability of coupled map networks with delays.
\newblock {\em SIAM J. Applied Dynamical Systems}, 5(3):508--527, 2006.

\bibitem{Atay04b}
F.~M. Atay, J.~Jost, and A.~Wende.
\newblock Delays, connection topology, and synchronization of coupled chaotic
  maps.
\newblock {\em Physical Review Letters}, 92:144101, 2004.

\bibitem{Bauer08b}
F.~Bauer, F.~M. Atay, and J.~Jost.
\newblock Synchronized chaos in networks of simple units.
\newblock submitted.

\bibitem{Bauer08}
F.~Bauer.
\newblock Spectral properties of the normalized graph laplace operator of
  directed graphs.
\newblock in preparation.

\bibitem{Chung97}
F.~R.~K. Chung.
\newblock {\em Spectral Graph Theory}, volume~92 of {\em CBMS}.
\newblock American Mathematical Society, 1997.

\bibitem{Horn06}
Roger~A. Horn and Charles~R. Johnson.
\newblock {\em Matrix Analysis}.
\newblock Cambridge University Press, 2006.

\bibitem{Brualdi91}
Richard~A. Brualdi and Herbert~J. Ryser.
\newblock {\em Combinatorial Matrix Theory}.
\newblock Cambridge University Press, 1991.

\bibitem{Lu07}
W.~Lu, F.~M. Atay, and J.~Jost.
\newblock Synchronization of discrete-time dynamical networks with time-varying
  couplings.
\newblock {\em SIAM J. Math. Anal.}, 39(4):1231--1259, 2007.

\bibitem{Kaneko-book93}
K.~Kaneko, editor.
\newblock {\em Theory and applications of coupled map lattices}.
\newblock Wiley, New York, 1993.

\bibitem{Aronson90}
D.~G. Aronson and N.~Kopell.
\newblock Amplitude response of coupled oscillators.
\newblock {\em Physica D}, 41:403--449, 1990.

\bibitem{Devaney89}
R.~Devaney.
\newblock {\em An introduction to chaotic dynamical systems - second edition}.
\newblock Addison Wesley, 1989.

\bibitem{Whitley83}
D.~Whitley.
\newblock Discrete dynamical systems in dimensions one and two.
\newblock {\em Bull. London Math. Soc.}, 15:177--217, 1983.

\bibitem{Atay08}
F.~Atay and F.~Bauer.
\newblock Synchronizability of coupled oscillators in directed and signed
  networks.
\newblock {\em in preparation}.

\bibitem{Schöll07}
E.~Sch\"oll and H.G. Schuster.
\newblock {\em Handbook of chaos control, 2nd edition}.
\newblock Wiley-Vch, 2007.

\bibitem{Froyland92}
Jan Froyland.
\newblock {\em Introduction to chaos and coherence}.
\newblock IOP Publishing, 1992.

\bibitem{Singer78}
D.~Singer.
\newblock Stable orbits and bifurcations of maps of the interval.
\newblock {\em SIAM. J. Appl. Math.}, 35:260--267, 1978.

\bibitem{Milnor77}
J.~Milnor and W.~Thurston.
\newblock On iterated maps of the interval i \& ii.
\newblock {\em Princeton preprints}, 1977.

\end{thebibliography}
\end{document}